\newcommand{\be}{\begin{equation}}
\newcommand{\ee}{\end{equation}}
\newcommand{\bs}{\begin{split}}
\newcommand{\es}{\end{split}}
\newcommand{\rvec}{\mathbf{r}}
\newcommand{\Qvec}{\mathbf{q}}
\newcommand{\xvec}{\mathbf{x}}
\newcommand{\evec}{\mathbf{e}}
\newcommand{\GS}{\mathrm{GS}}
\newcommand{\SMA}{\mathrm{SMA}}
\newcommand{\Hres}{H_{\mathrm{res}}}
\newcommand{\Hdiag}{H_{\mathrm{diag}}}
\newcommand{\dop}{\mathcal{D}}
\newcommand{\mL}{\mathcal{L}}
\newtheorem{definition}{Definition}[section]
\newtheorem{lemma}{Lemma}[section]
\newtheorem{proposition}[lemma]{Proposition}
\newcommand{\figeq}[2]{\vcenter{\hbox{  \includegraphics[scale = #1]{#2}}}}
\newcommand{\vx}{\xvec}
\newcommand{\dk}{D_\mathcal{K}}
\newcommand{\rmax}{r_\mathrm{max}}
\newcommand{\mk}{\mathcal{K}}
\newcommand{\mA}{\mathcal{A}}
\newcommand{\mB}{\mathcal{B}}
\newcommand{\vxt}{\tilde{\mathbf{x}}}
\newcommand{\vl}{\mathbf{l}}
\newcommand{\vvl}{\vec{\mathbf{l}}}
\newcommand{\hd}{d}
\newcommand{\hr}{r}
\newcommand{\hb}{b}
\newcommand{\bd}{\bar{d}}
\newcommand{\br}{\bar{r}}
\newcommand{\bb}{\bar{b}}
\newcommand{\Smax}{S_\mathrm{max}}
\newcommand{\Amax}{A_\mathrm{max}}
\newcommand{\splus}{ S^{+}}
\newcommand{\sminus}{ S^{-}}
\newcommand{\sz}{ S^{z}}
\newcommand{\vv}{\psi }
\newcommand{\bvv}{ \bar{\psi}}
\newcommand{\verdimers}{
\hspace{-0.12cm}
\vcenter{
\hbox{  \includegraphics[scale = 0.07]{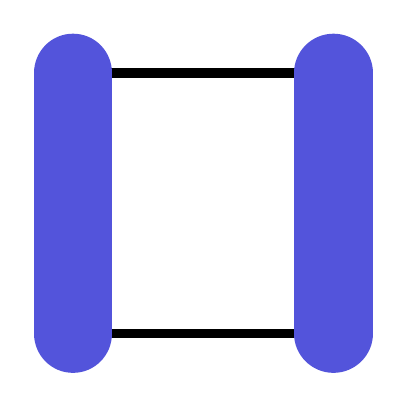}
}
}\hspace{-0.12cm}
}
\newcommand{\hordimers}{
\hspace{-0.12cm}
\vcenter{
\hbox{  \includegraphics[scale = 0.07]{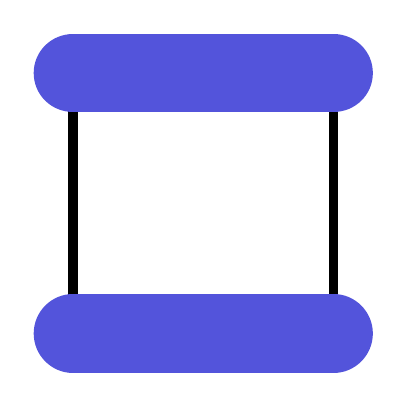}
}
}\hspace{-0.12cm}
}
\begin{document}

\doparttoc % Tell to minitoc to generate a toc for the parts
\faketableofcontents % Run a fake tableofcontents command for the partocs

\title{Height-conserving quantum dimer models}
%\title{Constrained Quantum Dimer Models: Phase Diagrams, Dual Spin Models, and Fragmentation } 

\author{Zheng Yan}
\affiliation{Department of Physics and HKU-UCAS Joint Institute of Theoretical and Computational Physics,
The University of Hong Kong, Pokfulam Road, Hong Kong, China}
%\affiliation{State Key Laboratory of Surface Physics and Department of Physics, Fudan University, Shanghai 200438, China}

\author{Zi Yang Meng}
\email{zymeng@hku.hk}
\affiliation{Department of Physics and HKU-UCAS Joint Institute of Theoretical and Computational Physics,
The University of Hong Kong, Pokfulam Road, Hong Kong, China}

\author{David A. Huse}
\affiliation{Department of Physics, Princeton University, Princeton, NJ 08544, USA}
\affiliation{Institute for Advanced Study, Princeton, NJ 08540, USA}
\author{Amos Chan}
\email{amos.chan@princeton.edu}
\affiliation{Princeton Center for Theoretical Science, Princeton University, Princeton, NJ 08544, USA}

\date{\today}

\begin{abstract}
We propose a height-conserving quantum dimer model (hQDM) such that the
lattice sum of its associated height field is conserved, and that it admits a Rokhsar-Kivelson (RK) point. 
The hQDM with minimal interaction range  on the square lattice exhibits Hilbert space fragmentation and 
maps exactly to the XXZ spin model on the square lattice in certain Krylov subspaces.
We obtain the ground-state phase diagram of hQDM via quantum Monte Carlo simulations, and demonstrate that a large portion of it is within the Krylov subspaces which admit the exact mapping to the
XXZ model, with dimer ordered phases
corresponding to easy-axis and easy-plane 
spin orders.
At the RK point, the apparent dynamical exponents obtained from the single mode approximation and the height correlation function %
show drastically different behavior across the Krylov subspaces, exemplifying Hilbert space fragmentation and emergent glassy phenomena.
\end{abstract}

\maketitle

\paragraph{Introduction.--} Quantum dimer models (QDM) \cite{rkham} are paradigmatic models of strongly-correlated systems subject to strong local constraints. 
Originally introduced to model the physics of short-range resonating valence bond states %
\cite{anderson1987rvb, anderson1974,Kivelson1987}, QDM have subsequently been shown to host  a plethora of phenomena, such as topological order and fractionalization~\cite{moessner2001tri,qdm2002kagome, moessner2001short}, mapping to height models~\cite{Nienhuis_1984, henley1997relaxation, henley2004}, unconventional phase transition with anyon condensation~\cite{Ralko2005,federic_qd_2006,Plat2015z2,ZhengYan2021Triangle,ZY2022}, emergent continuous symmetry and gauge field~\cite{ZhengYan2021Triangle,ZY2022} and more \cite{punk_qdm_2015, oakes_dimers_2018, powell_constraints_2018, feldmeier2019emergent, ryu2019,  alet2020, %flicker2020, 
	flicker2021, wildeboer2021qdmscar, powell2022}.
On the other hand, recent experimental progress in ultracold atomic gases~\cite{Semeghini21,Satzinger2021,Samajdar2021,ZY2022} has led to a surge of interest in understanding non-equilibrium dynamics in
closed quantum many-body systems, 
including quantum many-body chaos~\cite{nahum2017, nahum2018, keyserlingk2018, cdc1,cdc2, Chan_2019, kitaev2015}, many-body localization~\cite{gornyi2005, baa2006, pal2010many, nandkishore2015many}, quantum many-body  scars~\cite{shiraishi2017systematic, moudgalya2018a, turner2018, moudgalya2018b, turner2018quantum, ho2018periodic, khemani2019int}, anomalous dynamics and subdiffusive behaviour~\cite{feldmeier2020, Scherg2021, morningstar2020, moudgalya2021spectral, iaconis2021multipole, Feldmeier2021slow,  Rakovszky2020sliom, aidelsburger2021tilt, waseem2020tilted}, localization in fractonic systems~\cite{pai2019, fractonreview,fractonreview2,PY2021fracton,zhoufracton2022} and Hilbert space fragmentation (HSF)~\cite{sala2020ergodicity, khemani2020, moudgalya2019krylov, yang2020HSF, lee2021HSF, moudgalya2021hilbert, mukherjee2021hsf, langlett2021hsf, Pozsgay2021HSF, herviou2021hsf, khudorozhkov2021hilbert, feng2022hsf, Li2021HSF, Patil2020}, the latter of which provide examples of non-integrable systems which fail to obey the Eigenstate Thermalization Hypothesis~\cite{deutsch1991quantum, srednicki1994chaos, rigol2008thermalization}. 

In this paper, guided by the connection between constrained dynamics and height-conserving field theories uncovered in \cite{moudgalya2021spectral}, we design a height-conserving quantum dimer model (hQDM) that displays phenomena of constrained systems, in particular HSF, yet retains attractive features of QDM described above.
Specifically, for the square-lattice hQDM with minimal-range interactions,
we show that the conservation of %
its associated height field leads to HSF, and that there exists an exact mapping between hQDM and a XXZ spin model in certain Krylov subspaces (KS). 
We employ the unbiased sweeping cluster quantum Monte Carlo (QMC) simulation~\cite{Yan_2019_mc, yan2021improved,ZhengYan2021Triangle,ZY2022} to obtain the ground-state phase diagram and find the main part of phase diagram can be characterised by that of the two-dimensional XXZ model, with different dimer ordering corresponding to easy-axis and easy-plane spin long-range orders.  At the RK point, the apparent dynamical exponents obtained from the single mode approximation and the height correlation of QMC data, exhibit drastically different behavior across the KS. These results confirm the fragmented Hilbert space and emergent glassy behavior of our model and its relevance to further developments of constrained quantum lattice models is discussed.
%%%%%%%%%%%%%%%%%%%%%%%%%%%%%%%%%%%%%%%

\paragraph{Model.--}
Consider close-packed tiling of dimers on a square (bipartite) lattice, such that each site is occupied by exactly one dimer. 
We introduce a height-conserving QDM such that (i) it admits a general Rokhsar-Kivelson (RK) point~\cite{RK1988} in its parameter space, and (ii) the sum
of its associated height field $h(\xvec)$ is conserved.
We focus on the Hamiltonian $H = H_{\mathrm{res}} + H_{\mathrm{diag}}$ with interaction of \textit{minimal} range on the square lattice, where
\begin{equation} \label{eq:hQDM_def}
\begin{aligned}
%%\nonumber
\Hres
=
- t \sum
\left(
\,
\left|
\vcenter{\hbox{  \includegraphics[scale = 0.1]{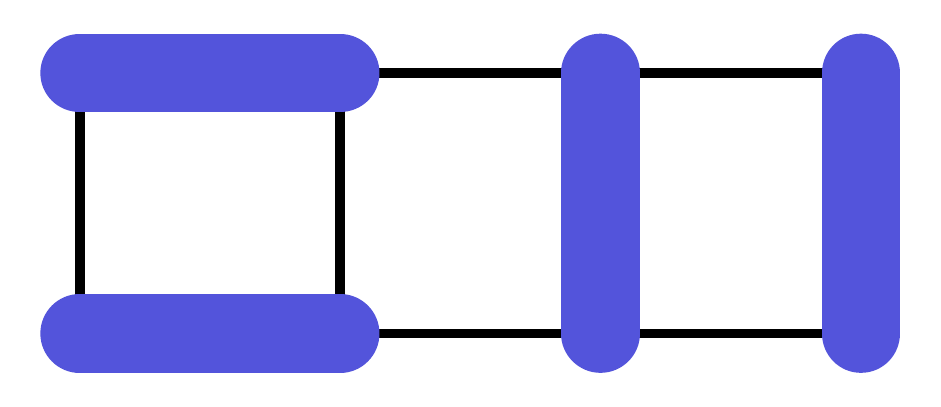}
}}
\right\rangle
\left\langle
\vcenter{\hbox{  \includegraphics[scale = 0.1]{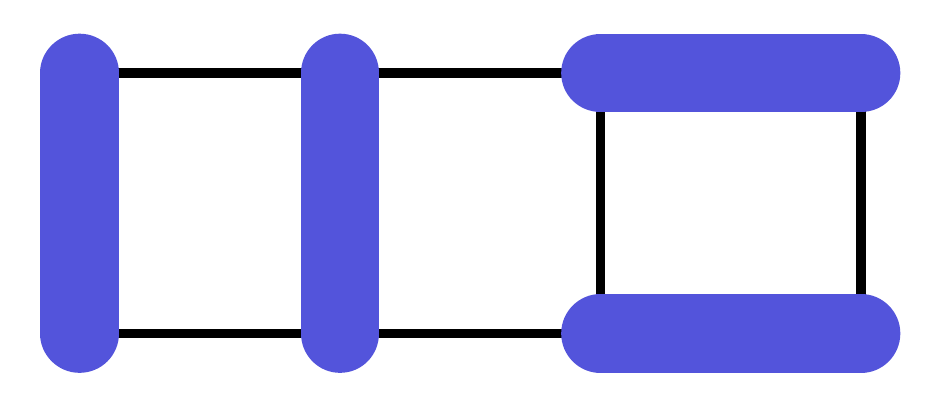}
}}
\right|
\right.
\\
\left.
+
\left|
\vcenter{\hbox{  \includegraphics[scale = 0.1]{2020_Quantum_Dimer_2_color_v1}
}}
\right\rangle
\left\langle
\vcenter{\hbox{  \includegraphics[scale = 0.1]{2020_Quantum_Dimer_1_color_v1}
}}
\right|
\,
\right) \;,
%+ (\text{similar  terms})
\\
%%%%%%%%%%%%
\Hdiag
=
V \sum \left(
\,
\left|
\vcenter{\hbox{  \includegraphics[scale = 0.1]{2020_Quantum_Dimer_1_color_v1}
}}
%\vcenter{\hbox{  \includegraphics[scale = 0.1]{}
%}}
\right\rangle
\left\langle
\vcenter{\hbox{  \includegraphics[scale = 0.1]{2020_Quantum_Dimer_1_color_v1}
}}
\right|
\right.
\\
\left.
+
\left|
\vcenter{\hbox{  \includegraphics[scale = 0.1]{2020_Quantum_Dimer_2_color_v1}
}}
\right\rangle
\left\langle
\vcenter{\hbox{  \includegraphics[scale = 0.1]{2020_Quantum_Dimer_2_color_v1}
}}
\right| \,
\right) \;.
\end{aligned}
\end{equation}
The sums are over all possible vertically- or horizontally-aligned next-to-nearest-neighbour plaquette pairs.
hQDM with interactions of longer range and/or conserved height multipoles (see Supplementary Materials (SM)~\cite{SM}) are amenable to field-theoretical analysis, which we discuss in detail in a forthcoming work~\cite{paper2}. 
  \begin{figure}[t]
%\centering
   \includegraphics[width=0.46 \textwidth  ]{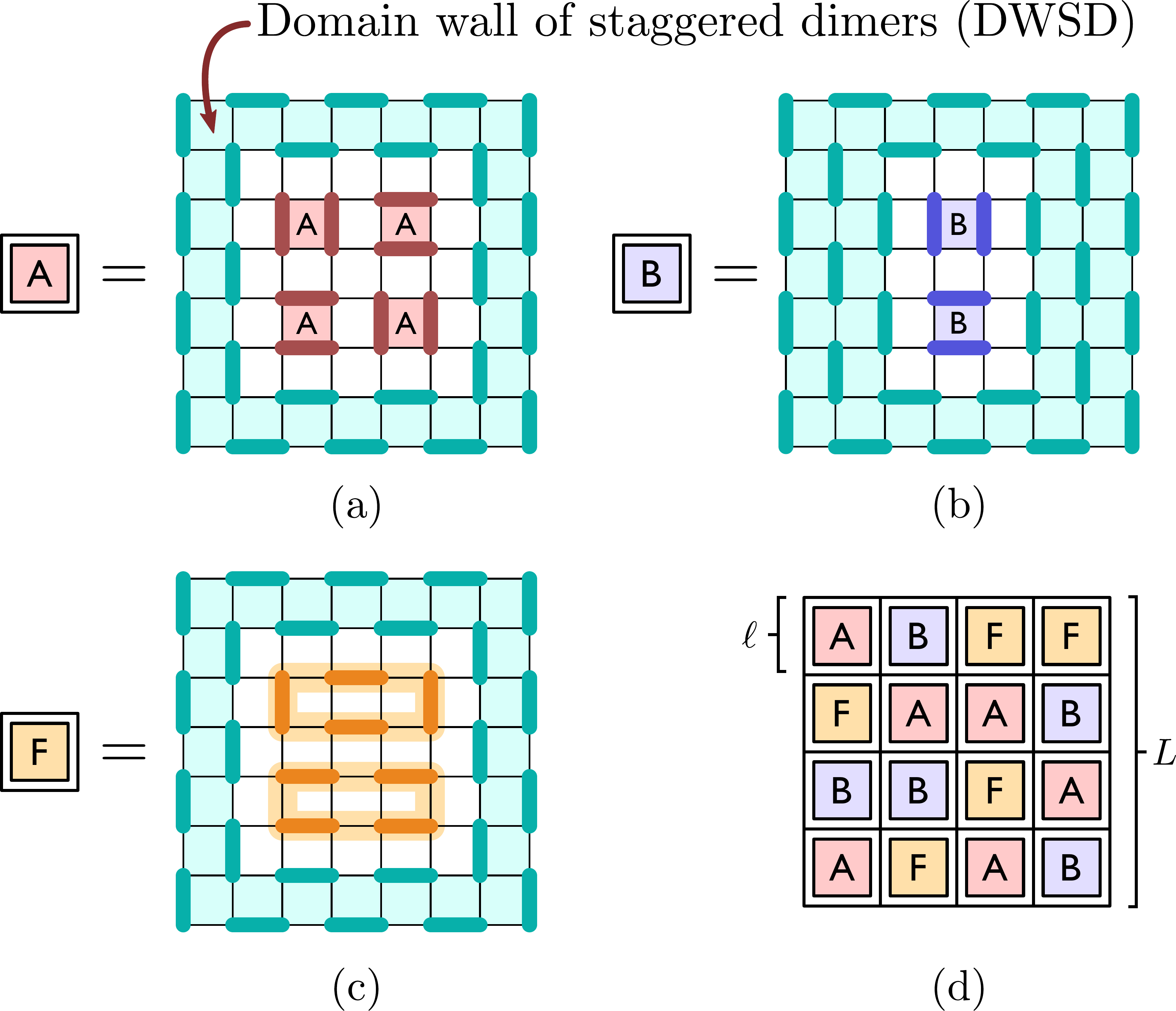}
    \caption{ 
    (a,b) DWSD (in cyan) surrounding regions of flippable plaquettes residing in sublattices A and B respectively.
    (c) DWSD surrounding a region of inactive dimer configuration. 
    Loop updates on the orange loops will generate additional inactive configurations exponential in the region size.
    (d) $O(\gamma^{L^2})$ number of KS with isolated active regions can be constructed by piecing (a,b,c) together with $\gamma >1$. 
    } \label{fig:subsector_puz}
\end{figure}
Like the QDM, the hQDM preserves the \textit{tilts} or \textit{winding numbers}, defined in terms of the height field as 
$t_1= [h(\xvec= (L,x_2))- h(\xvec= (0,x_2))]/L$ and 
$ t_2= [h(\xvec= (x_1,L))- h(\xvec= (x_1,0))]/L$ 
for any values of $x_1$ and $x_2$.
Additionally, the short-range nature of the pairwise plaquette flips of $\Hres^{(0)}$ leads to extra conserved quantities, namely the total height $I_X$ for each sublattice,
$\{X=\mathrm{A}, \mathrm{B}, \mathrm{C}, \mathrm{D} \}$, shown in Fig.~\ref{fig:phasediag}(b).
Together, the set of conserved charges is $\{ t_1, t_2, I_{ \mathrm{A}}, I_{ \mathrm{B}}, I_{ \mathrm{C}}, I_{ \mathrm{D}} \}$. 

\begin{figure}[t]
\centering
   \includegraphics[width=\columnwidth  ]{./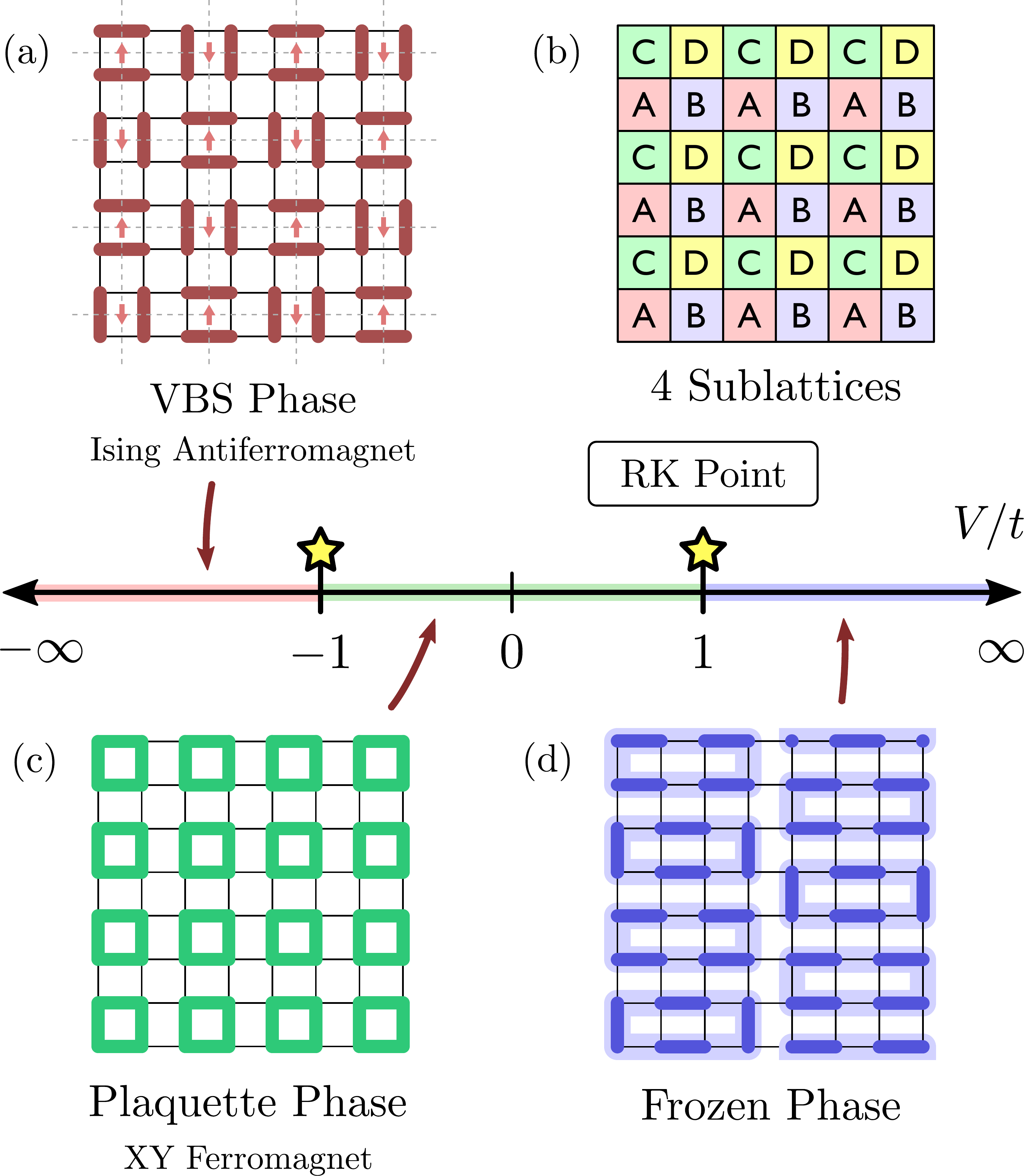}
    \caption{Phase diagram of the minimal-range hQDM 
    on the square lattice. 
    For $V/t \in (-\infty, -1)$, the system is in a VBS phase, with eight degenerate ground states generated by performing  plaquette-flips and translations of the pattern (a) to four sublattices defined in (b). The hQDM restricted to the KS of (a) can be exactly mapped to the XXZ model, where the VBS is the Ising antiferromagnetic phase. For $V/t \in (-1, 1)$, the system is in the plaquette phase illustrated in (c), corresponding to the XY ferromagnetic phase of the XXZ model. 
    For $V/t \in (1, \infty)$, the system is in the frozen phase and has  $O(e^{L^2})$ number of inactive and degenerate GS, some of which can be generated by performing loop updates on (d). 
    The transition point between VBS and plaquette phases is equivalent to the antiferromagnetic Heisenberg point of the XXZ model and that between the plaquette and frozen phases is the RK point.
    } \label{fig:phasediag}
\end{figure}

\paragraph{Hilbert space fragmentation.--} Our hQDM has HSF, i.e., the full space of states breaks into dynamically disconnected KS.  A \textit{Krylov subspace} (KS) is defined as $\mk(H, \ket{\psi})  = \mathrm{span}\{ H^n\ket{\psi}, n\in \mathbb{N} \} \equiv \mk( \ket{\psi})$ with size $D_\mk$, and in our context, the basis states $\ket{\psi}$ are dimer configurations. 
A system is said to exhibit HSF~\cite{sala2020ergodicity, khemani2020, moudgalya2019krylov} if $\mk(H, \ket{\psi})$ does not span the Hilbert subspace labelled by the symmetry quantum numbers of $\ket{\psi}$.
In our minimal-range hQDM, 
we prove in SM~\cite{SM} that there exists an exponential number (in system size) of subsectors with (i) $\dk=1$ (frozen KS), (ii) $\dk=O(1)$ (minimally active KS), and (iii) $\dk=O(\alpha^{L^2})$ (active KS) with constant $\alpha >1$.
The idea of the proof is demonstrated in a construction using \textit{domain wall of staggered dimers} (DWSD) (see Fig.~\ref{fig:subsector_puz}), defined as a connected region of plaquettes where no plaquettes are occupied by a parallel dimer pair. 
A DWSD is a ``blockade'' since regions separated by DWSD are dynamically disconnected in that no terms in $H$ can act non-trivially on or across the DWSD.
Claims (i-iii) are then proven by obtaining distinct KS via piecing multiple blocks of active or inactive dimer configurations surrounded by DWSD, shown in Fig.~\ref{fig:subsector_puz} and SM~\cite{SM}.  We also have strong indications, but not a proof, as discussed below, that this minimal-range hQDM has strong HSF, so no KS has an entropy density equal to the full entropy density.

\paragraph{Phase diagram.--}
We can obtain the phase diagram (Fig.~\ref{fig:phasediag}) of our minimal-range hQDM 
in several tractable limits.%
For $V/t \to -\infty$, the Hamiltonian favors states with the most number of flippable plaquette pairs. Such a state is shown in Fig.~\ref{fig:phasediag}(a), and 
we prove in SM~\cite{SM} that there are eight degenerate ground states (GS) generated by (i) performing pair-dimer-rotation on all plaquettes with parallel dimers, and (ii) translating the pattern [Fig.~\ref{fig:phasediag}(a)] to the four sublattices [Fig.~\ref{fig:phasediag}(b)]. 
Furthermore, these eight degenerate GS belong to four (disconnected) KS associated with each sublattice~\cite{SM}.
We refer to these GS as the valence bond solid (VBS).

For $V/t \in (1, \infty)$, the Hamiltonian favors states without any flippable plaquette pairs, i.e. the system is in the frozen phase. 
We explicitly construct degenerate GS with extensive entropy in Fig.~\ref{fig:phasediag}(d), generated by performing loop updates -- by flipping occupied links to unoccupied links in a given loop, and vice versa -- on each blue loop~\cite{SM}.
Each of such states form a KS of size one. 
Note that all inactive configurations in QDM are also valid  inactive configurations in hQDM.
However, unlike QDM, hQDM has extensive GS entropy in the frozen phase while QDM has sub-extensive GS entropy in the staggered phase for $V/t \in(1,\infty)$, 
and the degenerate hQDM GS contain states with a variety of height tilts, including the columnar states with zero tilt.

\begin{figure}[t!]
\centering
\includegraphics[width=\columnwidth]{./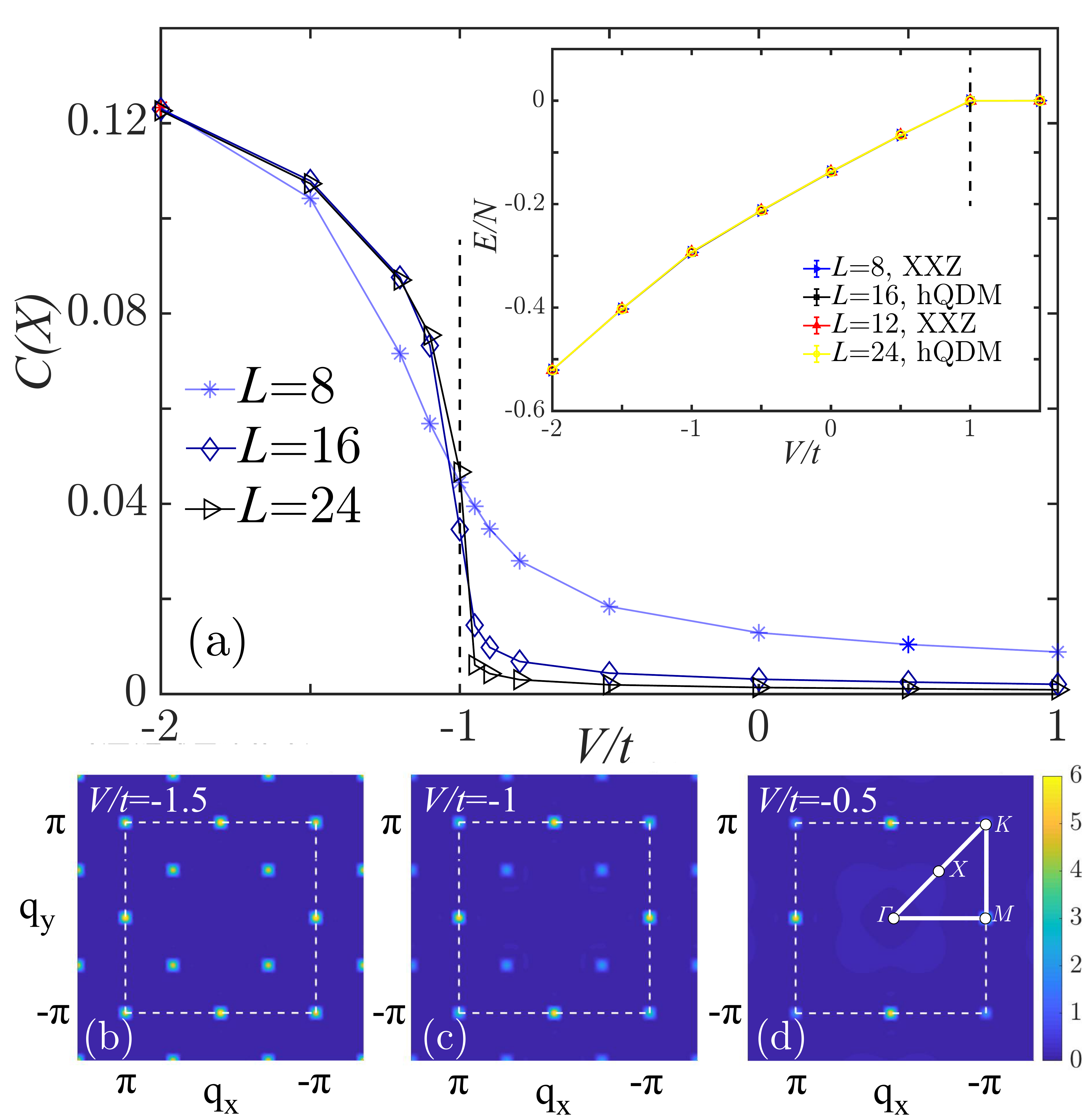}
    \caption{
(a) The dimer-pair structure factor $C(X)$ for hQDM shows a first order transition from VBS (Ising antiferromagnet in the XXZ model) to the Plaquette (XY ferromagnet in the XXZ model) phase at $V=-1$. Inset shows energy density of hQDM with $L=16,24$ and of XXZ model with $L=8,12$ as a function of $V/t$. The curves coincide and exhibit transition at the RK point $V=1$. (b), (c) and (d) are the dimer-pair structure factor at $V=-1.5$, -1 and -0.5 inside the Brilliouin zone, respectively. With $C(\mathbf{q})$ peaks at $X$ and $M$ in (b) and (c), but only peak at $M$ in (d). The high-symmetry path is denoted in (d).
} \label{Fig2}
\end{figure}

At $V/t=1$, as in QDM, hQDM has a RK point \cite{RK1988, henley1997relaxation, henley2004}. 
The GS at RK point are highly degenerate, with a unique GS from each KS given by
 $  |\GS\rangle_{\mk} \propto \sum_{C \in \mk } |C\rangle$.
The field theoretical approaches at the RK point will be discussed in detail in~\cite{paper2}.
Away from the RK point, one can consider the following heuristic argument to justify the frozen and VBS phases: 
with $V/t  \rightarrow 1^+$, 
subspaces without any flippable plaquettes have 
the lowest energy, consistent with the frozen phase;
with $V/t  \rightarrow 1^-$, 
the lowest energy subspaces should be the ones with the most flippable plaquettes, i.e. the KS where VBS reside.

There exists an exact mapping between our minimal-range hQDM  
and XXZ spin models in a certain set of KS, $\mk(\ket{\psi}_X)$, where all flippable plaquettes in $\ket{\psi}_X$ are residing on the same sublattice $X \in \{\mathrm{A}, \mathrm{B}, \mathrm{C}, \mathrm{D} \}$ via 
\be \label{eq:map}
\; 
    \left|
\hspace{-0.05cm}
\vcenter{\hbox{  \includegraphics[scale = 0.1]{2020_05_Quantum_Dimer_Color_plaquette_2}
}}
\hspace{-0.05cm}
\right\rangle_\vx
%\rightarrow \, 
\xrightarrow{\; \; g \; \; } \, 
    \left|
\downarrow
\right\rangle_{\vxt}
\, ,
\qquad 
    \left|
\hspace{-0.05cm}
\vcenter{\hbox{  \includegraphics[scale = 0.1]{2020_05_Quantum_Dimer_Color_plaquette_1}
}}
\hspace{-0.05cm}
\right\rangle_\vx
\xrightarrow{\; \; g \; \; }  \,
    \left|
\uparrow
\right\rangle_{\vxt}
\; ,
\ee
where $\vx$ and $\vxt$ labels the plaquettes of the original lattice and of sublattice $X$ respectively~\cite{SM}.
The validity of this mapping relies on the fact %\cite{SM} 
that flippable plaquettes in these KS always remain in the same sublattice under the action of $H$. Therefore, it maps a $2L\times 2L$ hQDM into a $L\times L$ XXZ model in certain subspaces:
\be\label{eq:xxz_mapping}
H_{\mathrm{hQDM}} 
\Big|_{
\mk
\left(
%\hhQDM,\,  
{\ket{\psi}_{X}}
\right)
, \{\vx\} 
}
=
H_{\mathrm{XXZ}}
\Big|_{
\mk
\left(
%\hxxz, \, 
g\left( \ket{\psi}_X \right)
\right)
, \{\vxt\} 
} \;,
\ee
and
\be\label{eq:xxz_model}
\begin{split}
\left. H_{\mathrm{XXZ}}\right|_{ \{\vxt\} } = -t \sum_{\left\langle \vxt, \vxt' \right\rangle} \left( 
\splus_{\vxt} \sminus_{\vxt'}  + \sminus_{\vxt} \splus_{\vxt'}  
\right) 
\\
+
\frac{V}{2} \sum_{\left\langle \vxt, \vxt' \right\rangle} \left(1 -  \sz_{\vxt} \sz_{\vxt'}  \right) \;,
\end{split}
\ee
where $S^{a}_{\vxt}$ with $\alpha=x,y,z$ are the Pauli matrices, and $S^{\pm} = \frac{1}{2}(S^x \pm i S^y$). 
Note that when longer-range terms of hQDM are introduced, this mapping is no longer valid as the relevant KS have enlarged. 
However, for the minimal-range hQDM, this XXZ description is crucial in describing the phase diagram for $V/t\in (-\infty, 1)$.
\paragraph{Sweeping cluster quantum Monte Carlo on hQDM.--}
We employ the sweeping cluster QMC algorithm~\cite{Yan_2019_mc,yan2021improved} to simulate the phase diagram.
The method has recently been intensively applied on the square and triangle lattice QDM models to map out the GS phase diagram and extract the low-energy excitations~\cite{ZhengYan2021Mixed,ZhengYan2021Triangle,ZY2022}. 
The method not only respects the local constraint in hQDM between each QMC update, but also allow for loop updates with randomly-sampled loops, ensuring different subspaces labelled by $\{ t_1, t_2, I_{\mathrm{A}}, I_{\mathrm{B}}, I_{\mathrm{C}}, I_{ \mathrm{D}} \}$ and the fragmented KS within being sampled. We simulated the hQDM with linear system sizes up to $L=24$, and the inverse temperature $\beta=L$.
Our QMC data show that the GS in $V/t\in (-\infty, 1)$ indeed reside
in KS described by the mapping in Eq.~\eqref{eq:xxz_mapping}. 
To probe the phase diagram, we study the dimer-pair structure factor $C_{ij}(\mathbf{q})$, defined as the Fourier transform of the pair dimer correlation function
\begin{equation}\label{eq:cor_func}
    C_{ij,\xvec -\xvec '}=\frac{\langle \dop_{i\xvec}\dop_{j,\xvec'}\rangle-\langle \dop_{i,\xvec}\rangle\langle \dop_{j,\xvec'}\rangle}{
    \langle \dop^2_{i,\xvec'}\rangle-\langle \dop_{i,\xvec'}\rangle^2} \;,
\end{equation}
where $\dop_{i,\xvec}$ with $i=\hordimers,\verdimers$ is the projector of vertically- or horizontally-aligned dimer pairs  at plaquette $\xvec$~\cite{SM}.
The $C(\mathbf{q}=X)$ where $X=(\pi/2,\pi/2)$ and $i=j=\hordimers$ serves as the square of the order parameter for the VBS phase. As shown in Fig.~\ref{Fig2} (a), as $V/t$ increases from $-\infty$, the hQDM undergoes a first order phase transition at $V/t=-1$ from the VBS phase ($C(X)$ is finite) to a plaquette phase ($C(X)$ is zero) in one of the four sublattices. The same information is revealed in the $C(\mathbf{q})$ in the Brillouin zone (BZ) in Fig.~\ref{Fig2}(b), (c) and (d), where the structure peak at $X$ appears at $V=-1.5$ (we note here the peaks at $M$ and $K$ are due to the repetition of that at $X$) and $V=-1$ (here the peaks at $M$ and $X$ are due to the coexistence of VBS and plaquette phases), but at $V=-0.5$, when the system is inside the plaquette phase, $C(\mathbf{q})$ only develops peaks at $M=(\pi,0)$ points.

This transition in hQDM corresponds to the transition from antiferromagnetic Ising phase to XY phase in XXZ model in each of the KS corresponding to the four sublattices. At the Heisenberg point $V=-1$, the O(3) symmetry is recovered and the GS spontaneously breaking this symmetry resulting in the coexistence of peak at $X$ and $M$ in Fig.~\ref{Fig2}(c). %At either side of this point, $V<-1$ the GS breaks the Ising symmetry and, $-1<V<1$ the GS breaks the XY symmetry. 
Similar phenomena have been seen in the context of deconfined quantum critical point and dimerized quantum magnets~\cite{BWZhao2018,GYSun2021}.

Moreover, we also compare the GS energy density from sweeping cluster QMC on hQDM, with that from directed loop QMC simulation~\cite{syljuaasen2002quantum,syljuaasen2003directed,alet2005generalized} on the XXZ model. As shown in the inset of Fig.~\ref{Fig2}(a), two energies are identical, meaning the hQDM GS indeed resides in the XXZ KS, where the mapping to XXZ model is valid.
Interestingly, at $V/t\ge 1$, the system is at the RK point and frozen phase, our QMC data show the mapping remains valid. 
Indeed, the ferromagnetic phase of XXZ model (when $V/t > 1$) can be mapped to the columnar phase of the square lattice QDM at $V/t>1$~\cite{moessner2008review}, which is one of the degenerate GS in the frozen phase of hQDM. 

\begin{figure}[t!]
\centering
\includegraphics[width=\columnwidth]{./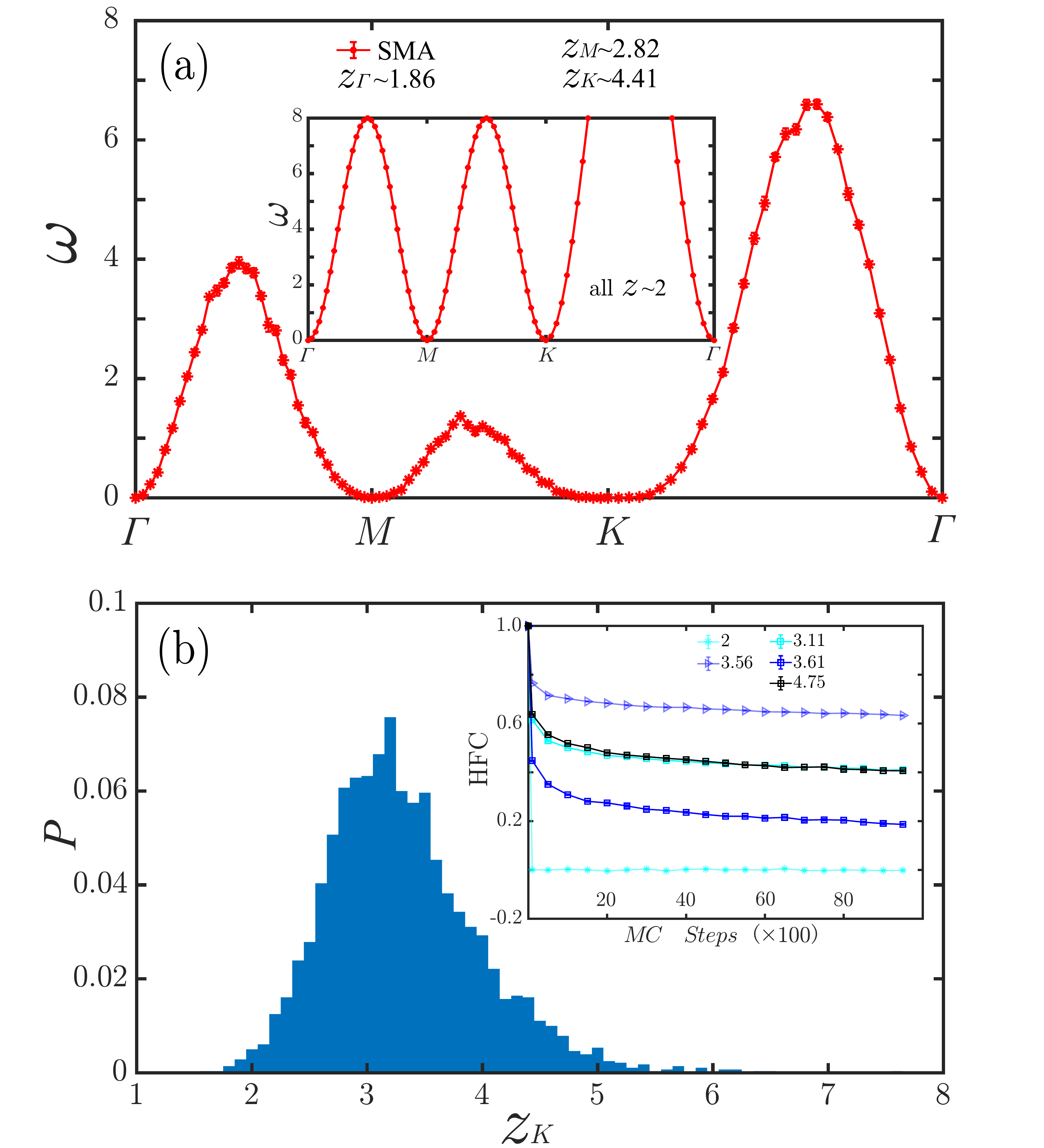}
\caption{(a). The dispersion of hQDM at the RK point in a random sector is given along the high-symmetry path, with the $z_\mathbf{q}$ denoted. Inset shows the dispersion in the XXZ sector, with all $z=2$. (b). The distribution of $z_K$ for hQDM via random walk MC and SMA, the distribution is obtained via 3000 randomly chosen sectors. Inset shows the real-space height field autocorrelation function of A sublattice in different KS with different $z_K$ (as labeled in the figure) against MC steps. 
}
 \label{Fig4}
\end{figure}

\paragraph{Dispersion and dynamic properties.--}
We finally study low-lying excitations at the RK point of the minimal-range hQDM where each sector has a zero-energy ground state, and show that the apparent dynamical exponent $z$ varies strongly between KS and the system develops emergent glassy behavior, indicating strong HSF.
%
%
%
%
%%%%%%%%%%%%%%%%%
%%%%%%%%%%%%%%%%%
%%%%%%%%%%%%%%%%%
%%%%%%%%%%%%%%%%%
%%%%%%%%%%%%%%%%%
%
%
%%%%%%%%%%%%%%%%%
%%%%%%%%%%%%%%%%%
%%%%%%%%%%%%%%%%%
%%%%%%%%%%%%%%%%%
%%%%%%%%%%%%%%%%%
%

We utilize the single mode approximation (SMA) upon QMC data to approximate a dimer dispersion~\cite{L_uchli_2008}. %
Besides the low-energy dispersion, via the random walk method, we also sample KS randomly and obtain the distribution of apparent dynamical exponents $z_\mathbf{q}$ at various $\mathbf{q}$ points in the BZ. 

Our results are shown in Fig.~\ref{Fig4}. Different from the QDM, the dynamics of hQDM are extremely sensitive to KS. In the XXZ KS which hosts the ground state for $V<t$, the dynamic exponent $z=2$ (similar to that of the QDM~\cite{L_uchli_2008}), all the exponents near $K$ are indeed close to 2 as shown in the inset of Fig.~\ref{Fig4} (a)~\footnote{Excitations in the XXZ KS can be seen as a Goldstone mode in ferromagnetic spin-$1/2$ Heisenberg model, the power $z=2$ can be obtained via a simple spin-wave approximation.}. However, the hQDM dispersion in other KS (the main panel of Fig.~\ref{Fig4}(a)) clearly acquire larger apparent $z$. 
We further plot the distribution of $z_{K=(\pi,\pi)}$ extracted from the dispersion from sampling different sectors in Fig.~\ref{Fig4}(b), which exhibits a wide distribution centered around $z=3$. 
Such behavior suggests strong fragmentation, since randomly-chosen states do not appear to be in the same KS with any significant probability.  We expect that the model will only have weak HSF upon inclusion of longer range terms in \eqref{eq:hQDM_def}.

We also study the autocorrelation function of height field at RK points using classical MC, by updating dimer configurations via the $\Hres$ on a randomly chosen position from a initial state of certain sector. 
The measurement of height autocorrelation along Monte Carlo steps (MCS) in one sublattice is shown in the inset of Fig.~\ref{Fig4}(b). 
In the XXZ sector, $z_K \sim 2$, autocorrelation decays very fast with ergodicity in the related Hilbert subspace. 
In other sectors, $z_K > 2$, the autocorrelation decays much slower which hinders the relaxation. There seems to be no obvious relationship between the autocorrelation decay and $z_K$. It is a strong evidence for the glassy behavior which emerges in most sectors of hQDM.

\paragraph{Conclusion and outlook.--} In this paper, we introduced the hQDM as a realization of height-conserving models. Rich phenomena such as HSF with DWSD as blockades, mapping to XXZ model, and the emergent glassy behavior at the RK are observed analytically and numerically.
The hQDM height field theory~\cite{henley1997relaxation,henley2004}, lattice gauge theory, Lifshitz models and conformal critical point~\cite{ardonne2004} will be presented in upcoming work~\cite{paper2}.
Emergent fractonic behaviour~\cite{haah, vijay2015, vijay2016, chamon2005, pretko2017a, pretko2017b}, explored recently in dimer models in higher than two dimensions~\cite{Feldmeier2021fracdimer, you2021fracdimer}, 
the nature of the low-lying excitations, and the structure of HSF in hQDM with interaction of various ranges are also interesting open directions. 

\paragraph{Acknowledgements.--}
We wish to thank Jie Wang and Shivaji Sondhi for useful discussions, and Sanjay Moudgalya and Abhinav Prem for previous collaboration. 
ZY and ZYM acknowledge support from the RGC of Hong Kong SAR of China (Grant Nos. 17303019, 17301420, 17301721 and AoE/P-701/20), the K. C. Wong Education Foundation (Grant No. GJTD-2020-01) and the Seed Funding "Quantum-Inspired explainable-AI" at the HKU-TCL Joint Research Centre for Artificial Intelligence. We thank the Information Technology Services at the University of Hong Kong and the Tianhe platforms at the National Supercomputer Centers in Guangzhou for technical support and generous allocation of CPU time. The authors acknowledge Beijng PARATERA Tech CO.,Ltd.(\url{https://www.paratera.com/}) for providing HPC resources that have contributed to the research results reported within this paper.
DAH is supported in part by NSF QLCI grant OMA-2120757.  AC is supported by fellowships from the Croucher foundation and the PCTS at Princeton University.

\bibliography{cQDM}

%%%%%%%%%%%%%%%%%%%%%%
%%%%%%%%%%%%%%%%%%%%%%
%%%%%%%%%%%%%%%%%%%%%%
%%%%%%%%%%%%%%%%%%%%%%
%%%%%%%%%%%%%%%%%%%%%%
%%%%%%%%%%%%%%%%%%%%%%
%%%%%%%%%%%%%%%%%%%%%%
%%%%%%%%%%%%%%%%%%%%%%
%%%%%%%%%%%%%%%%%%%%%%

\onecolumngrid
\newpage 

\appendix

\setcounter{equation}{0}
\setcounter{figure}{0}
\renewcommand{\thetable}{S\arabic{table}}
\renewcommand{\theequation}{S\thesection.\arabic{equation}}
\renewcommand{\thefigure}{S\arabic{figure}}
\setcounter{secnumdepth}{2}

\begin{center}
{\Large Supplementary Material \\ 
\vspace{0.2cm}
Height-conserving Quantum Dimer Models
}
\end{center}

In this supplementary material we provide additional details about:
\begin{enumerate}[label=\Alph*]
    \item Height representation of dimer configurations
    \item Construction of height-conserving qantum dimer models (hQDM)
    \begin{itemize}
        \item[1. ] hQDM in terms of height-flip operators   
        \item[2. ] A particular short-range hQDM  with general $m$
        \item[3. ] An example of longer-range hQDM with $m=0$
    \end{itemize}
    \item Analytically tractable limits in the phase diagram
        \begin{itemize}
        \item[1. ]  Ground state as $V/t \rightarrow \infty$ 
        \item[2. ]  Ground state as $V/t \rightarrow  - \infty$ 
    \end{itemize}
    \item Structure of Hilbert space
    \begin{itemize}
        \item[1. ] Domain walls of staggered dimers (DWSD) as blockades
        \item[2. ]  Exponential number of Krylov subspace of size $1$ 
        \item[3. ]  Exponential number of Krylov subspace of size $O(1)$ 
        \item[4. ] Exponential number  of Krylov subspace of size $O(\alpha^{L^2})$ with $\alpha>1$
    \end{itemize}
    \item Mapping between hQDM and spin models
    \item Dimer pair correlation functions and structure factors
    \item Single mode approximation
\end{enumerate}

%\part{} % Start the document part
%\parttoc % Insert the document TOC

%%%%%%%%%%%%%%%%%%%%%
%%%%%%%%%%%%%%%%%%%%%
%%%%%%%%%%%%%%%%%%%%%
%%%%%%%%%%%%%%%%%%%%%
%%%%%%%%%%%%%%%%%%%%%
%%%%%%%%%%%%%%%%%%%%%
%%%%%%%%%%%%%%%%%%%%%

\section{Height representation of dimer configurations}
\begin{figure}[H]
\centering
\includegraphics[width=0.8 \textwidth  ]{./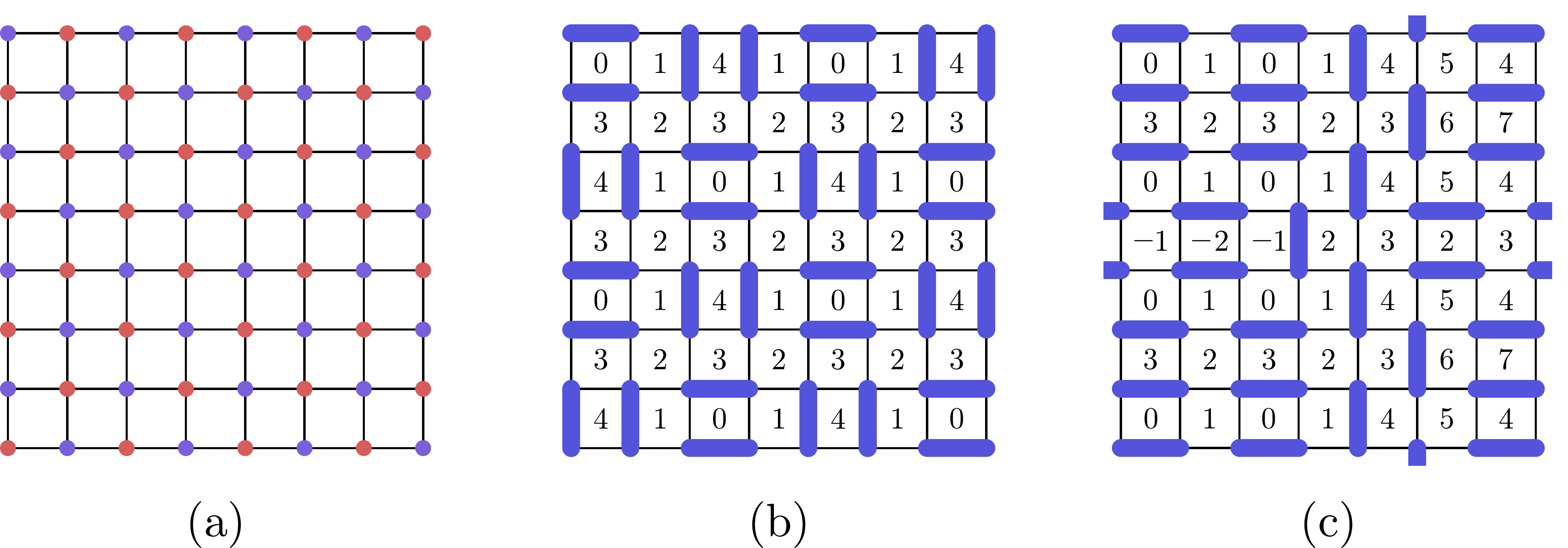}
    \caption{(a) The square lattice is a bipartite lattice whoses sites can be partitioned into set A (red) and B (blue), such that sites in set A are the only nearest neighbours of sites in set B.
    (b-c): Height representations of two full-packed dimer configurations: 
    (b) a ground state in the VBS phase, and
    (c) a ground state in the frozen phase (with periodic boundary condition) 
    }
\end{figure}

Consider a closed-pack dimer configuration on a bipartite lattice, taken to be a square lattice for simplicity.
Partition the set of all sites into two sets, set A and set B, such that sites in set A are the only nearest neighbours of sites in set B (and vice versa).
For each close-packed dimer configuration, we can associate a height representation as follows:
Assign an arbitrary integer value of height to  an arbitrary plaquette, say height $h(\xvec)=0$ to the top left plaquette. 
Pick one of the sites at the corners of this plaquette, say the bottom right corner. 
If the site is from the B (A) sublattice, then we walk to another plaquette in the (counter-)clockwise direction around the bottom right site. 
Suppose the old plaquette has height $ h(\xvec')$.
We assign a height $h(\xvec'')= h(\xvec')+1$ to the new plaquette if there is not a dimer in between the two plaquettes, and $h(\xvec'')= h(\xvec')-3$.   
We apply this rule to all plaquettes until all plaquettes are assigned with some height values. 
Note that the height field assignment is non-unique, since we can pick an arbirary integer value to the initial plaquette.
%
%

%%%%%%%%%%%%%%%%%%%%%%
%%%%%%%%%%%%%%%%%%%%%%
%%%%%%%%%%%%%%%%%%%%%%
%%%%%%%%%%%%%%%%%%%%%%
%%%%%%%%%%%%%%%%%%%%%%
%%%%%%%%%%%%%%%%%%%%%%
%%%%%%%%%%%%%%%%%%%%%%
%%%%%%%%%%%%%%%%%%%%%%
%%%%%%%%%%%%%%%%%%%%%%

%%%%%%%%%%%%%%%%%%%%%%%
%%%%%%%%%%%%%%%%%%%%%%%
%%%%%%%%%%%%%%%%%%%%%%%
%%%%%%%%%%%%%%%%%%%%%%%
%%%%%%%%%%%%%%%%%%%%%%%
%%%%%%%%%%%%%%%%%%%%%%%
%%%%%%%%%%%%%%%%%%%%%%%
%%%%%%%%%%%%%%%%%%%%%%%
%%%%%%%%%%%%%%%%%%%%%%%

\section{Construction of height-conserving qantum dimer models (hQDM)}\label{app:hQDM_overall}
We construct the hQDM model based on two criteria: 
\begin{itemize}
    \item[(i)] The hQDM model conserves the $m$-th multipole moments of its associated height field $h(\xvec)$, i.e. $[H^{(m)},I^{a_1 \dots a_m}_{m, \mL}]= 0 $, where $I^{a_1 \dots a_m}_{m, \mL}$ is defined as
\be \label{eq:height_charge}
I_{m,R}^{a_1 \dots a_m} = \sum_{\xvec \in  R} x_{a_1} \dots x_{a_m} h(\xvec) \;,
\ee
where $a_i = 1,2$ denotes the two spatial dimensions, and the sum of $\xvec$ is over plaquettes inside some region $R$.
    \item[(ii)] The model  admits a Rokhsar-Kivelson point, i.e. the Hamiltonian is frustration-free at some points in the parameter space.
\end{itemize}
In Appendix \ref{app:hQDM_heightflip}, we state a definition of hQDM of general interaction range in terms of height-flipping operators. 
In Appendix \ref{app:higher_m}, we define a particular short-range hQDM for general $m$, explicitly state it for $m=0, 1$ and $2$, and show that this short-range hQDM is in fact minimal-range hQDM for $m=0$.
In Appendix \ref{app:hQDM_longer}, we state an example of longer-range hQDM for $m=0$, which contains ``knight moves''.

\subsection{hQDM in terms of height-flip operators   }\label{app:hQDM_heightflip}

%%%%%%%%%%%%%%%%%%%%%%%%%%%%%%%%%%%%%%%%%%
Consider the bipartite square lattice with sublattices labelled by $\mA$ and $\mB$. We define the plaquette state and the $\overline{(\dots)}$ operation on the plaquette state as 
\be 
\vv_\vx \equiv 
\left\{
\begin{aligned}
\;
&
    \left|
\hspace{-0.05cm}
\vcenter{\hbox{  \includegraphics[scale = 0.1]{2020_05_Quantum_Dimer_Color_plaquette_1}
}}
\hspace{-0.05cm}
\right\rangle_\vx
\quad \quad
&
\text{if }\vx \in \mA 
\\
\; 
&
    \left|
\hspace{-0.05cm}
\vcenter{\hbox{  \includegraphics[scale = 0.1]{2020_05_Quantum_Dimer_Color_plaquette_2}
}}
\hspace{-0.05cm}
\right\rangle_\vx
&
\text{if }\vx \in \mB 
\end{aligned}
\right.
\;,
\qquad \qquad 
\bvv_\vx \equiv
\left\{
\begin{aligned}
\; 
&
    \left|
\hspace{-0.05cm}
\vcenter{\hbox{  \includegraphics[scale = 0.1]{2020_05_Quantum_Dimer_Color_plaquette_2}
}}
\hspace{-0.05cm}
\right\rangle_\vx
\quad \quad
&
\text{if }\vx \in \mA 
\\
\; 
&
    \left|
\hspace{-0.05cm}
\vcenter{\hbox{  \includegraphics[scale = 0.1]{2020_05_Quantum_Dimer_Color_plaquette_1}
}}
\hspace{-0.05cm}
\right\rangle_\vx
\quad \quad 
&
\text{if }\vx \in \mB 
\end{aligned}
\right.
\;\; .
\ee
%%%%%%%%%%%%%%%%%
%%%%%%%%%%%%%%%%%
%%%%%%%%%%%%%%%%%
%%%%%%%%%%%%%%%%%
%%%%%%%%%%%%%%%%%
%%%%%%%%%%%%%%%%%
%%%%%%%%%%%%%%%%%
Furthermore, we define the operators 
\be \label{eq:heigh_flip1}
\hr_\vx \equiv
\vv_\vx  \bvv_\vx^\dagger 
=
\left\{
\begin{aligned}
\; 
&
    \left|
\hspace{-0.05cm}
\vcenter{\hbox{  \includegraphics[scale = 0.1]{2020_05_Quantum_Dimer_Color_plaquette_1}
}}
\hspace{-0.05cm}
\right\rangle_\vx
\left\langle
\hspace{-0.05cm}
\vcenter{\hbox{  \includegraphics[scale = 0.1]{2020_05_Quantum_Dimer_Color_plaquette_2}
}}
\hspace{-0.05cm}
\right|_\vx
\quad 
&
\text{if }\vx \in \mA 
\\
\; 
&
    \left|
\hspace{-0.05cm}
\vcenter{\hbox{  \includegraphics[scale = 0.1]{2020_05_Quantum_Dimer_Color_plaquette_2}
}}
\hspace{-0.05cm}
\right\rangle_\vx
\left\langle
\hspace{-0.05cm}
\vcenter{\hbox{  \includegraphics[scale = 0.1]{2020_05_Quantum_Dimer_Color_plaquette_1}
}}
\hspace{-0.05cm}
\right|_\vx
\quad 
&
\text{if }\vx \in \mB 
\end{aligned}
\right.
\;,
\qquad 
\hd_\vx \equiv
\vv_\vx  \vv_\vx^\dagger 
=
\left\{
\begin{aligned}
\; 
&
    \left|
\hspace{-0.05cm}
\vcenter{\hbox{  \includegraphics[scale = 0.1]{2020_05_Quantum_Dimer_Color_plaquette_1}
}}
\hspace{-0.05cm}
\right\rangle_\vx
\left\langle
\hspace{-0.05cm}
\vcenter{\hbox{  \includegraphics[scale = 0.1]{2020_05_Quantum_Dimer_Color_plaquette_1}
}}
\hspace{-0.05cm}
\right|_\vx
\quad 
&
\text{if }\vx \in \mA 
\\
\; 
&
    \left|
\hspace{-0.05cm}
\vcenter{\hbox{  \includegraphics[scale = 0.1]{2020_05_Quantum_Dimer_Color_plaquette_2}
}}
\hspace{-0.05cm}
\right\rangle_\vx
\left\langle
\hspace{-0.05cm}
\vcenter{\hbox{  \includegraphics[scale = 0.1]{2020_05_Quantum_Dimer_Color_plaquette_2}
}}
\hspace{-0.05cm}
\right|_\vx
\quad 
&
\text{if }\vx \in \mB 
\end{aligned}
\right.
\;\; ,
\ee
and their $\overline{(\dots)}$ operation is given by
%%%%%%%%%%%%%%%%%%%%
%%%%%%%%%%%%%%%%%%%%
%%%%%%%%%%%%%%%%%%%%
%%%%%%%%%%%%%%%%%%%%
%%%%%%%%%%%%%%%%%%%%
%%%%%%%%%%%%%%%%%%%%
%%%%%%%%%%%%%%%%%%%%
\be \label{eq:heigh_flip2}
\br_\vx \equiv
\bvv_\vx  \vv_\vx^\dagger 
=
\left\{
\begin{aligned}
\; 
&
    \left|
\hspace{-0.05cm}
\vcenter{\hbox{  \includegraphics[scale = 0.1]{2020_05_Quantum_Dimer_Color_plaquette_2}
}}
\hspace{-0.05cm}
\right\rangle_\vx
\left\langle
\hspace{-0.05cm}
\vcenter{\hbox{  \includegraphics[scale = 0.1]{2020_05_Quantum_Dimer_Color_plaquette_1}
}}
\hspace{-0.05cm}
\right|_\vx
\quad 
&
\text{if }\vx \in \mA 
\\
\; 
&
    \left|
\hspace{-0.05cm}
\vcenter{\hbox{  \includegraphics[scale = 0.1]{2020_05_Quantum_Dimer_Color_plaquette_1}
}}
\hspace{-0.05cm}
\right\rangle_\vx
\left\langle
\hspace{-0.05cm}
\vcenter{\hbox{  \includegraphics[scale = 0.1]{2020_05_Quantum_Dimer_Color_plaquette_2}
}}
\hspace{-0.05cm}
\right|_\vx
\quad 
&
\text{if }\vx \in \mB 
\end{aligned}
\right.
\;,
\qquad 
\bd_\vx \equiv
\bvv_\vx  \bvv_\vx^\dagger 
=
\left\{
\begin{aligned}
\; 
&
    \left|
\hspace{-0.05cm}
\vcenter{\hbox{  \includegraphics[scale = 0.1]{2020_05_Quantum_Dimer_Color_plaquette_2}
}}
\hspace{-0.05cm}
\right\rangle_\vx
\left\langle
\hspace{-0.05cm}
\vcenter{\hbox{  \includegraphics[scale = 0.1]{2020_05_Quantum_Dimer_Color_plaquette_2}
}}
\hspace{-0.05cm}
\right|_\vx
\quad 
&
\text{if }\vx \in \mA 
\\
\; 
&
    \left|
\hspace{-0.05cm}
\vcenter{\hbox{  \includegraphics[scale = 0.1]{2020_05_Quantum_Dimer_Color_plaquette_1}
}}
\hspace{-0.05cm}
\right\rangle_\vx
\left\langle
\hspace{-0.05cm}
\vcenter{\hbox{  \includegraphics[scale = 0.1]{2020_05_Quantum_Dimer_Color_plaquette_1}
}}
\hspace{-0.05cm}
\right|_\vx
\quad 
&
\text{if }\vx \in \mB 
\end{aligned}
\right.
\;\; .
\ee
Note that the $\overline{(\dots)}$ operation is distributed without altering the order of the objects within the brackets.
%%%%%%%%%%%%%%%%%%
%%%%%%%%%%%%%%%%%%
%%%%%%%%%%%%%%%%%%
%%%%%%%%%%%%%%%%%%
%%%%%%%%%%%%%%%%%%
%%%%%%%%%%%%%%%%%%%%%%%%%%%%%%%%%%%%%%%%%%
%%%%%%%%%%%%%%%%%%
We define the operators 
\be \label{eq:b_op}
\begin{aligned}
\hb^{(-1)}(\vx) =& \;\hb_\vx  
\\
\hb^{(0)}(\vx; (\vl_0)) =& \;
\hb^{(-1)}(\vx)
\; \bb^{(-1)}(\vx+ \vl_0)
\\
\hb^{(1)}(\vx; (\vl_0,\vl_1)) =&\;
\hb^{(0)}(\vx;(\vl_0))\;
\bb^{(0)}(\vx+\vl_1; (\vl_0+\vl_1))
\\
\vdots
\\
\hb^{(m)}(\vx; \vvl_m)=&\;
\hb^{(m-1)}
(\vx; \vvl_{m-1}) \;
\bb^{(m-1)}
(\vx +\vl_m ; \vvl_{m-1} \oplus \vl_m )
\;,
\end{aligned}
\ee
where  $\vvl_m =(\vl_0,\vl_1, \dots, \vl_m)$ parametrizes the plaquette distances between height flip operators, and we define the operation $\vec{\mathbf{a}}
\oplus \mathbf{b} = (\mathbf{a}_1+ \mathbf{b}, \mathbf{a}_2+ \mathbf{b}, \dots)$.
The symbol $\hb$ can take the values of $\hr$, $\hd$, and $\vv$. For $\hb=\hr$, the operator $\hb^{(m)}(\vx; \vvl_m)$ can be interpretted as a $m$-th multipole creation or annihilation operator. 
The Hamiltonian can now be written compactly as 
\be\label{eq:hQDM_gen_def}
\begin{split}
H^{(m)} = & \, \Hres^{(m)} + \Hdiag^{(m)}
\\
\Hres^{(m)}
=&
-t 
 \sum_{\vx; \, \vvl:\, r[S(\vx; \vvl)] \leq \rmax } \left[
\,
\hr^{(m)}(\vx; \vvl_m) +
\br^{(m)}(\vx; \vvl_m)\right]
\;,
%+ (\text{similar  terms})
\\
\Hdiag^{(m)}
=& \;
V  \sum_{\vx; \, \vvl:\, r[S(\vx; \vvl)] \leq \rmax }  \left[
\,
\hd^{(m)}(\vx; \vvl_m) +
\bd^{(m)}(\vx; \vvl_m)\right]
\;,
\end{split}
\ee
where $\mathcal{S}(\mathbf{x}; \vvl)$ is the support of operator $\hb^{(m)}(\mathbf{x}; \vvl)$, and 
$r[\mathcal{S}(\mathbf{x}; \vvl)]$ is the largest Euclidean or Manhattan distance between any two plaquettes in the support $\mathcal{S}(\mathbf{x}; \vvl)$.  
Restrictions are imposed on the sum for simplicity, and they are sufficient but not necessary conditions for the model to satisfy criteria (i) and (ii).
Specifically, the sum is summed over all $\vx$ and $\vvl$ such that (a) each height flip operator is separated from other height flip operators by at least two plaquette distance, and (b)  the range $r[\mathcal{S}(\mathbf{x}; \vvl)]$  is smaller than a certain maximum range  $\rmax$.
Alternatively, one can define  $A[\mathcal{S}(\mathbf{x}; \vvl)]$ as the minimal size of a connected region of plaquettes that contain the support of operator $\hb^{(m)}(\mathbf{x}; \vvl)$, and define $\Amax$ as the maximum allowed support size.
The sum can then be defined such that condition (a) above is satisfied, and that (b) the minimal connected region that includes the support  $S(\mathbf{x}; \vvl)$ of $\hb(\mathbf{x}; \vvl)$ is no larger than $\Smax$.

It is worthwhile to comment that height flip operator defined in \eqref{eq:heigh_flip1} and \eqref{eq:heigh_flip2} corresponds to a loop update operation on a loop surrounding a \textit{single} plaquette. [Recall that given a loop of alternating link with and without dimers. A loop update is defined as flipping occupied links to unoccupied links in the loop , and vice versa.] One can define a variations of hQDM using height flip operator $b^{(-1)(\vx)}$ with support larger than a single plaquette, corresponding to loop updates operation on a larger loop.

As an example, the hQDM for $m=0$ can then be written as 
\be
\begin{split}
\Hres^{(0)}
=&
-t 
  \sum_{\vx; \vvl:\, r[S(\vx; \vvl)] \leq \rmax } 
 \left[
\,
\hr^{(0) } (\vx; (\vl))
+
\br^{(0)} (\vx; (\vl)) \right]
=
-t 
 \sum_{\vx;\, \vvl:\, r[S(\vx; \vvl)] \leq \rmax }  \left[
\,
\br_\vx \hr_{\vx+\vl}
+
\hr_\vx \br_{\vx+\vl}
\right]
\;,
%+ (\text{similar  terms})
\\
\Hdiag^{(0)}
=& \;
V \sum_{\vx;\, \vvl:\, r[S(\vx; \vvl)] \leq \rmax } \left[
\,
\hd^{(0) } (\vx; (\vl))
+
\bd^{(0) } (\vx; (\vl)) \right]
= 
V  \sum_{\vx; \vvl:\, r[S(\vx; \vvl)] \leq \rmax } \left[
\,
\bd_\vx \hd_{\vx+\vl}
+
\hd_\vx \bd_{\vx+\vl}
\right]
\;.
%+ (\text{similar  terms})
\end{split}
\ee
For $m=0$ and $\rmax=3$, the model given in Eq.~\eqref{eq:hQDM_def} is recovered (see for \ref{app:higher_m} for details).
%
%
%%%%%%%%%%
%%%%%%%%%%
%%%%%%%%%%
%%%%%%%%%%

%%%%%%%%%%
%%%%%%%%%%
%%%%%%%%%%
%%%%%%%%%%

\subsection{A particular short-range hQDM with general $m$
} \label{app:higher_m}

\begin{figure}[H]
\centering
\includegraphics[width=0.45 \textwidth  ]{./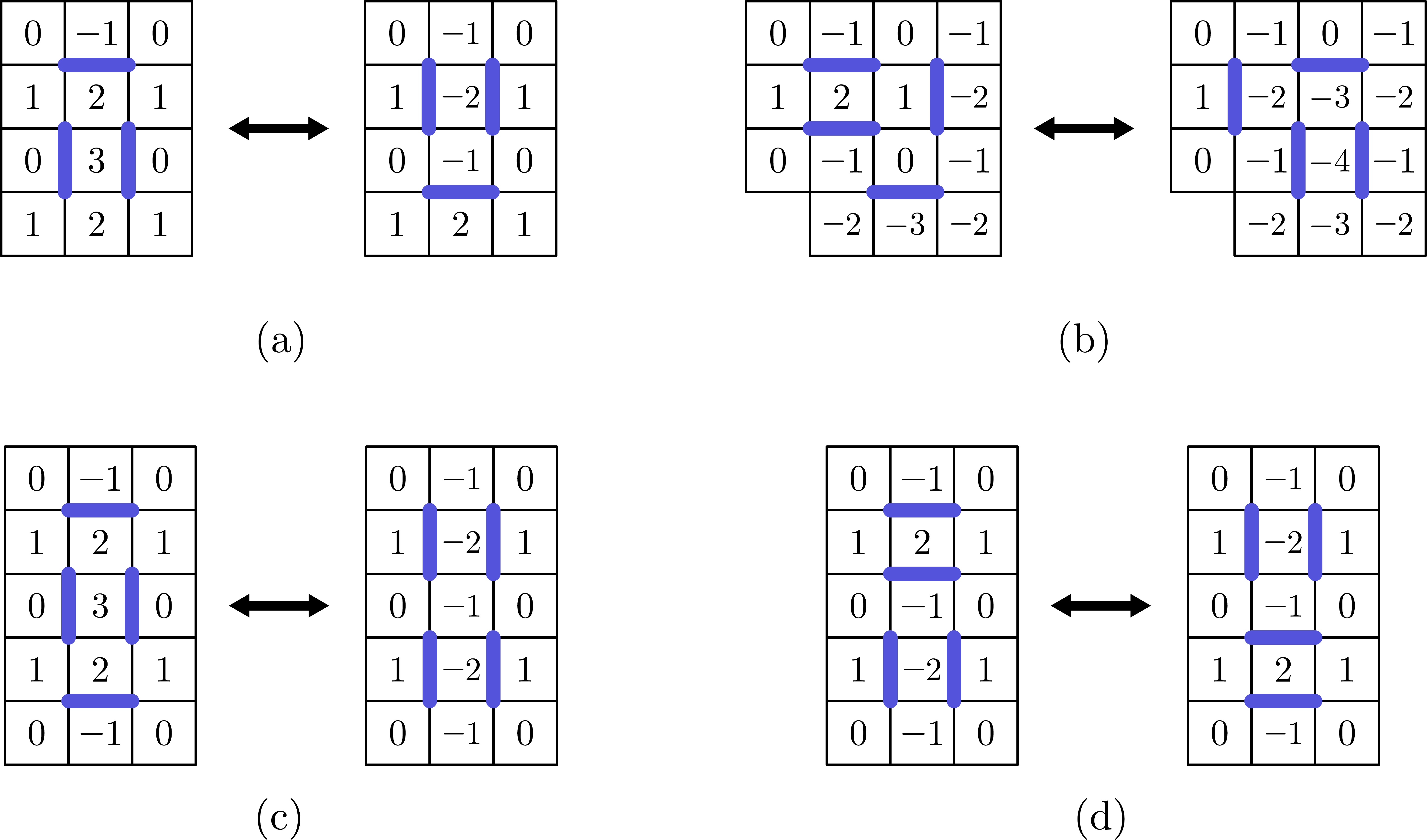}
    \caption{\label{fig:hQDM_derivation}
    (a) The only dimer move that involve two plaquettes does not preserve total height.  
    (b-d) The only dimer transformations that involve three adjacent plaquettes. (b) and (c) cannot conserve heights because the transformation can be written as a loop update of a loop surrounding three plaquettes -- such loop updates always only increase or decrease uniformly the heights within the loop.
    (d) The only dimer transformation on three adjacent plaquettes which preserve the total height. 
    }
\end{figure}

Here we define a particular short-range hQDM for general $m$, explicitly state it for $m=0, 1$ and $2$, and show that this short-range hQDM is in fact minimal-range hQDM for $m=0$.
The short-range hQDM is defined as 
\be\label{eq:hQDM_gen_def_short_range}
\begin{split}
H^{(m)} = & \, \Hres^{(m)} + \Hdiag^{(m)} \;,
\\
\Hres^{(m)}
=&
-
\frac{t}{2^{m \, \mathrm{mod} \, 2}}
 \sum_{\vx; \,i=1,2 } \left[
\,
\hr^{(m)}(\vx; \vvl_m( i)) +
\br^{(m)}(\vx; \vvl_m( i))\right]
\;,
%+ (\text{similar  terms})
\\
\Hdiag^{(m)}
=& \;
\frac{V}{2^{m \, \mathrm{mod} \, 2}}  \sum_{\vx; \, i=1,2 }  \left[
\,
\hd^{(m)}(\vx; \vvl_m( i)) +
\bd^{(m)}(\vx; \vvl_m( i))\right]
\;,
%+ (\text{similar  terms})
\end{split} 
\ee
with 
\be
\vvl_m( i) = (\vl_{0}(i), \vl_{1}(i), \dots, \vl_{m}(i)) =  (  2 \evec_i , 2 \evec_{\tilde{i}}, 4\evec_i, 4\evec_{\tilde{i}},\dots  )
= (2^{\lfloor a/2 \rfloor + 1} \evec_{((i + (a \,\mathrm{mod}\, 2)) \, \mathrm{mod}\,  2 )  +1}  )_{a=0,1,\dots,m}
\ee
where $i=1,2$ labels the horizontal and vertical direction, and $\tilde{i}=2,1$ if $i=1,2$ respectively.  $\lfloor \cdot \rfloor$ is the floor function.
The factor of 2 in Eq.~\eqref{eq:hQDM_gen_def_short_range} is to remove overcounting of terms for $m$, where the summands with $i=1$ and $2$ coincide. 
For $m=0$, we reproduce the Hamiltonian in Eq.~\eqref{eq:hQDM_def} reproduced below as 
\be \label{eq:hQDM_def2}
\begin{split}
\Hres^{(0)}
=&
-t \sum
\left(
\,
\left|
\vcenter{\hbox{  \includegraphics[scale = 0.1]{2020_Quantum_Dimer_1_color_v1}
}}
\right\rangle
\left\langle
\vcenter{\hbox{  \includegraphics[scale = 0.1]{2020_Quantum_Dimer_2_color_v1}
}}
\right|
+
\left|
\vcenter{\hbox{  \includegraphics[scale = 0.1]{2020_Quantum_Dimer_2_color_v1}
}}
\right\rangle
\left\langle
\vcenter{\hbox{  \includegraphics[scale = 0.1]{2020_Quantum_Dimer_1_color_v1}
}}
\right|
%\,
%\right) 
\right)
%+ (\text{similar  terms})
\\
\Hdiag^{(0)}
=&
\, 
V \, \sum \left(
\,
\left|
\figeq{0.1}{2020_Quantum_Dimer_1_color_v1}
%\vcenter{\hbox{  \includegraphics[scale = 0.1]{2020_Quantum_Dimer_1_v2}
%}}
\right\rangle
\left\langle
\vcenter{\hbox{  \includegraphics[scale = 0.1]{2020_Quantum_Dimer_1_color_v1}
}}
\right|
+
\left|
\vcenter{\hbox{  \includegraphics[scale = 0.1]{2020_Quantum_Dimer_2_color_v1}
}}
\right\rangle
\left\langle
\vcenter{\hbox{  \includegraphics[scale = 0.1]{2020_Quantum_Dimer_2_color_v1}
}}
\right|
\right)
%+ (\text{similar  terms})
\end{split}\;,
\ee
where again the sum sums over all possible vertically- or horizontally-aligned next-to-nearest-neighbour plaquette pairs.
Furthermore, for $m=0$, we show that this short-range hQDM is indeed the minimal-range hQDM, by enumerating all possible resonant and diagonal term with plaquette distance two in  in Fig.~\ref{fig:hQDM_derivation}. 
The hQDM on a square lattice for $m=1$ and $m=2$ can be defined respectively with
%%%%%%%%%%%%%%%%%%%%%%
%%%%%%%%%%%%%%%%%%%%%%
%%%%%%%%%%%%%%%%%%%%%%
\be \label{eq:hQDM_quad1}
\begin{aligned} 
H_{\mathrm{res}}^{(1)}
=&
-t \sum
\left(
\,
\left|
\vcenter{\hbox{  \includegraphics[scale = 0.1]{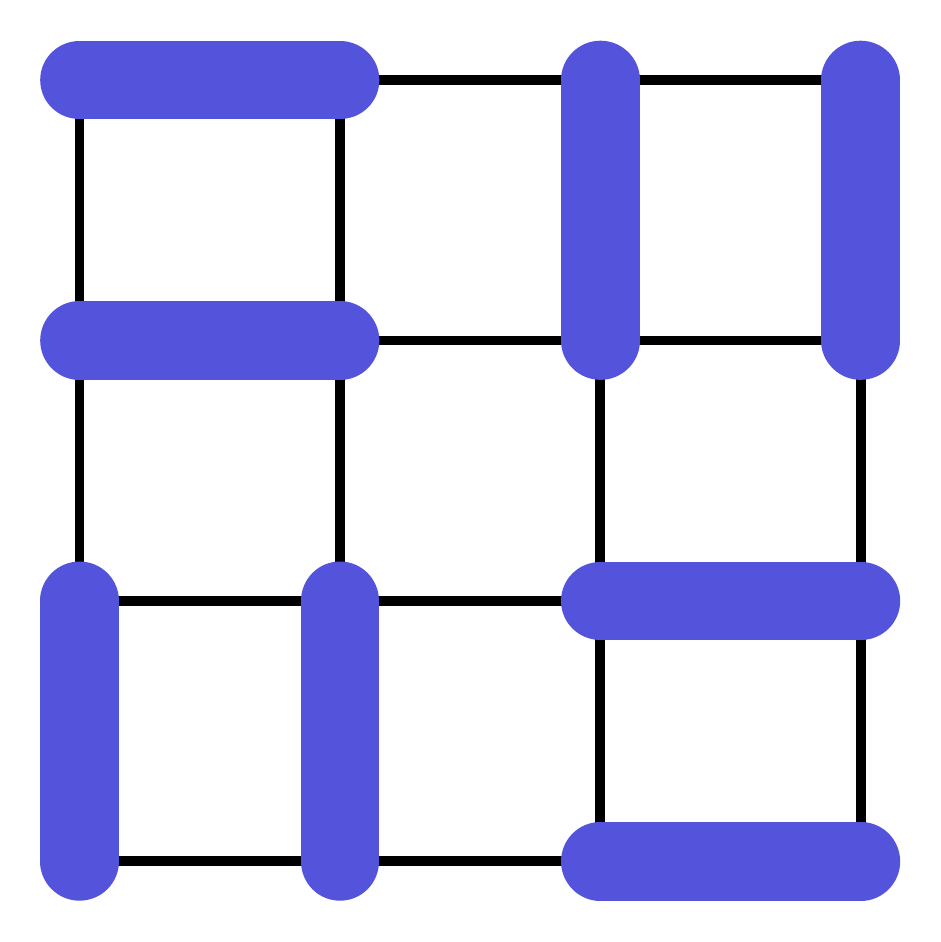}
}}
\right\rangle
\left\langle
\vcenter{\hbox{  \includegraphics[scale = 0.1]{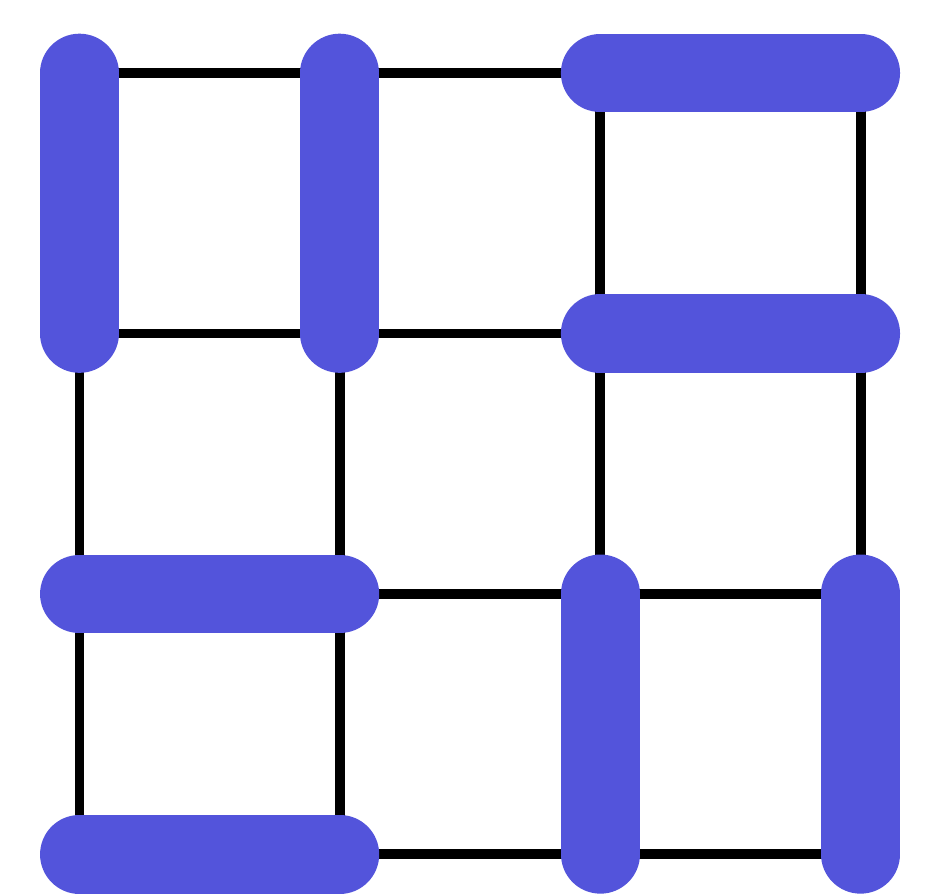}
}}
\right|
\right.
\left.
+
\left|
\vcenter{\hbox{  \includegraphics[scale = 0.1]{2020_Quantum_Dimer_3rd_moment_2_color_v1}
}}
\right\rangle
\left\langle
\vcenter{\hbox{  \includegraphics[scale = 0.1]{2020_Quantum_Dimer_3rd_moment_1_color_v1}
}}
\right|
\,
\right)  
%+ (\text{similar  terms})
\\
H_{\mathrm{diag}}^{(1)}
= &
V \sum \left(
\,
\left|
\figeq{0.1}{2020_Quantum_Dimer_3rd_moment_1_color_v1}
%\vcenter{\hbox{  \includegraphics[scale = 0.1]{2020_Quantum_Dimer_1_v2}
%}}
\right\rangle
\left\langle
\vcenter{\hbox{  \includegraphics[scale = 0.1]{2020_Quantum_Dimer_3rd_moment_1_color_v1}
}}
\right|
\right.
\left.
+
\left|
\vcenter{\hbox{  \includegraphics[scale = 0.1]{2020_Quantum_Dimer_3rd_moment_2_color_v1}
}}
\right\rangle
\left\langle
\vcenter{\hbox{  \includegraphics[scale = 0.1]{2020_Quantum_Dimer_3rd_moment_2_color_v1}
}}
\right|
\,
\right)  
%+ (\text{similar  terms})
\end{aligned} \;, 
\ee
%%%%
and 
%%%%%%%%%%%%%%%%%%%%%%
%%%%%%%%%%%%%%%%%%%%%%
%%%%%%%%%%%%%%%%%%%%%%
%%%%%%%%%%%%%%%%%%%%%%
%%%%%%%%%%%%%%%%%%%%%%
\be  \label{eq:hQDM_m2_model}
\begin{aligned}
\Hres^{(2)}
=&
\, t \, \sum \left(
\,
\left|
\figeq{0.1}{2020_Quantum_Dimer_4th_moment_1}
%\vcenter{\hbox{  \includegraphics[scale = 0.1]{2020_Quantum_Dimer_1_v2}
%}}
\right\rangle
\left\langle
\vcenter{\hbox{  \includegraphics[scale = 0.1]{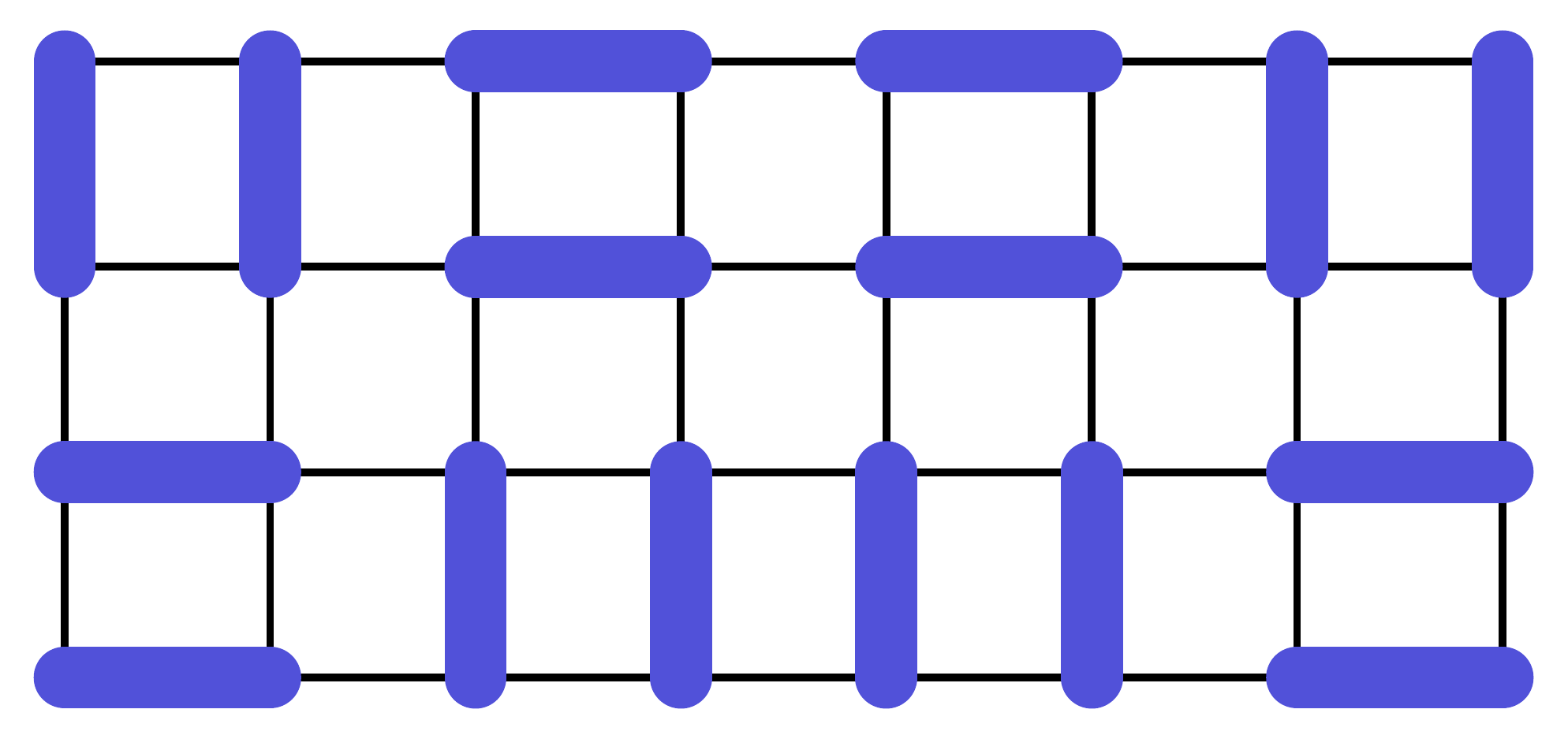}
}}
\right|
\right.
\left.
+
\left|
\vcenter{\hbox{  \includegraphics[scale = 0.1]{2020_Quantum_Dimer_4th_moment_2}
}}
\right\rangle
\left\langle
\vcenter{\hbox{  \includegraphics[scale = 0.1]{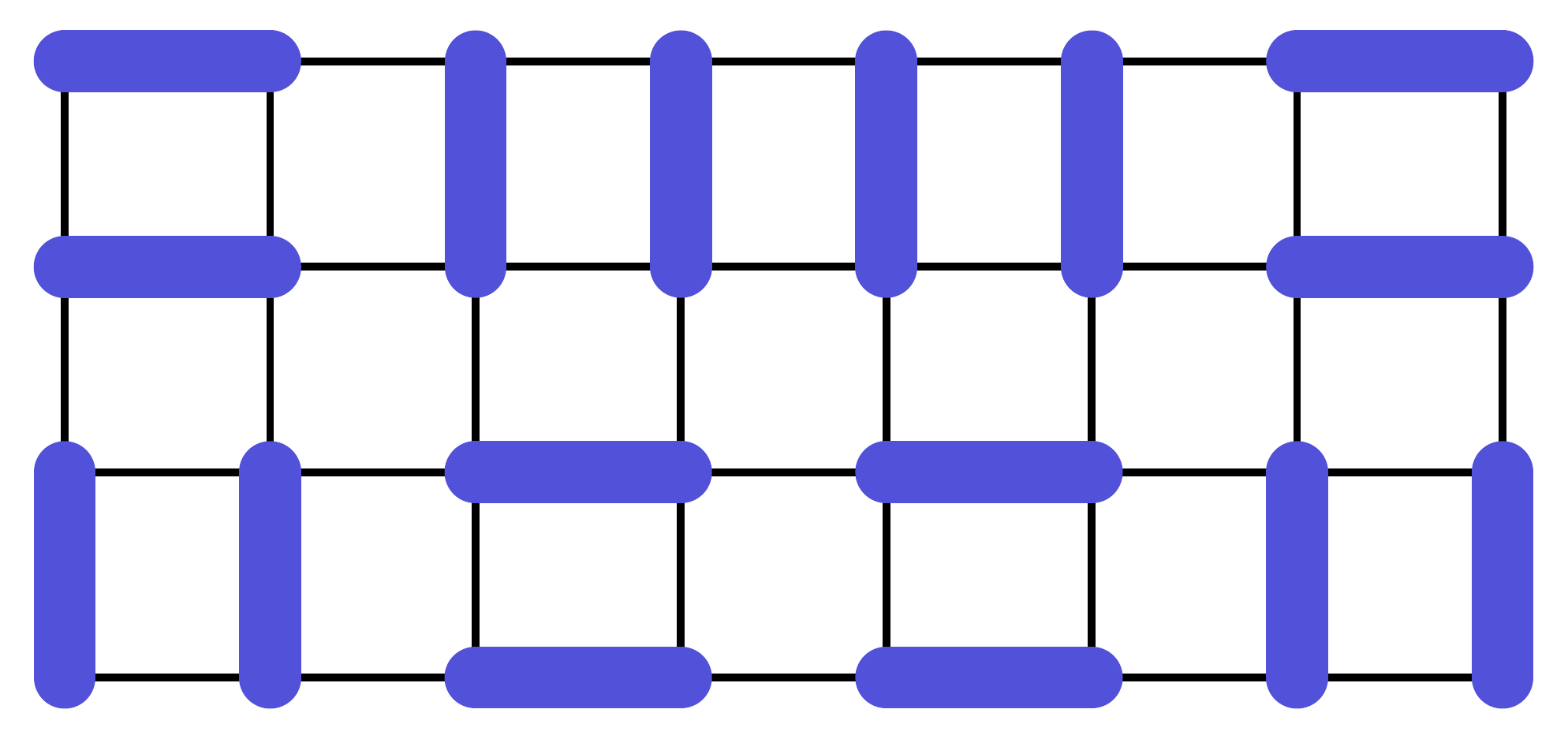}
}}
\right|
\,
\right)  
%+ (\text{similar  terms})
\\
\Hdiag^{(2)}
=&
\, V \, \sum \left(
\,
\left|
\figeq{0.1}{2020_Quantum_Dimer_4th_moment_1}
%\vcenter{\hbox{  \includegraphics[scale = 0.1]{2020_Quantum_Dimer_1_v2}
%}}
\right\rangle
\left\langle
\vcenter{\hbox{  \includegraphics[scale = 0.1]{2020_Quantum_Dimer_4th_moment_1}
}}
\right|
\right.
\left.
+
\left|
\vcenter{\hbox{  \includegraphics[scale = 0.1]{2020_Quantum_Dimer_4th_moment_2}
}}
\right\rangle
\left\langle
\vcenter{\hbox{  \includegraphics[scale = 0.1]{2020_Quantum_Dimer_4th_moment_2}
}}
\right|
\,
\right) 
\end{aligned}
\;.
%+ (\text{similar  terms})
\ee
where the sum is summing over all possible positions where one can act the operator on.
The higher-multipole-conserving of this particular short-range hQDM  can be straightforwardly generalized. 
Lastly, note that for $m=2$, Eq.~\eqref{eq:hQDM_m2_model} is not the only candidate model for \textit{minimal-range} hQDM. The following model also preserve the $m$-th multipole for $m=2$, 
\be  \label{eq:hQDM_3by3_hop}
\begin{aligned}
\Hres^{(2)}
=&
\, t \, \sum \left(
\,
\left|
\figeq{0.1}{2020_Quantum_Dimer_m2_alt_c1}
%\vcenter{\hbox{  \includegraphics[scale = 0.1]{2020_Quantum_Dimer_1_v2}
%}}
\right\rangle
\left\langle
\vcenter{\hbox{  \includegraphics[scale = 0.1]{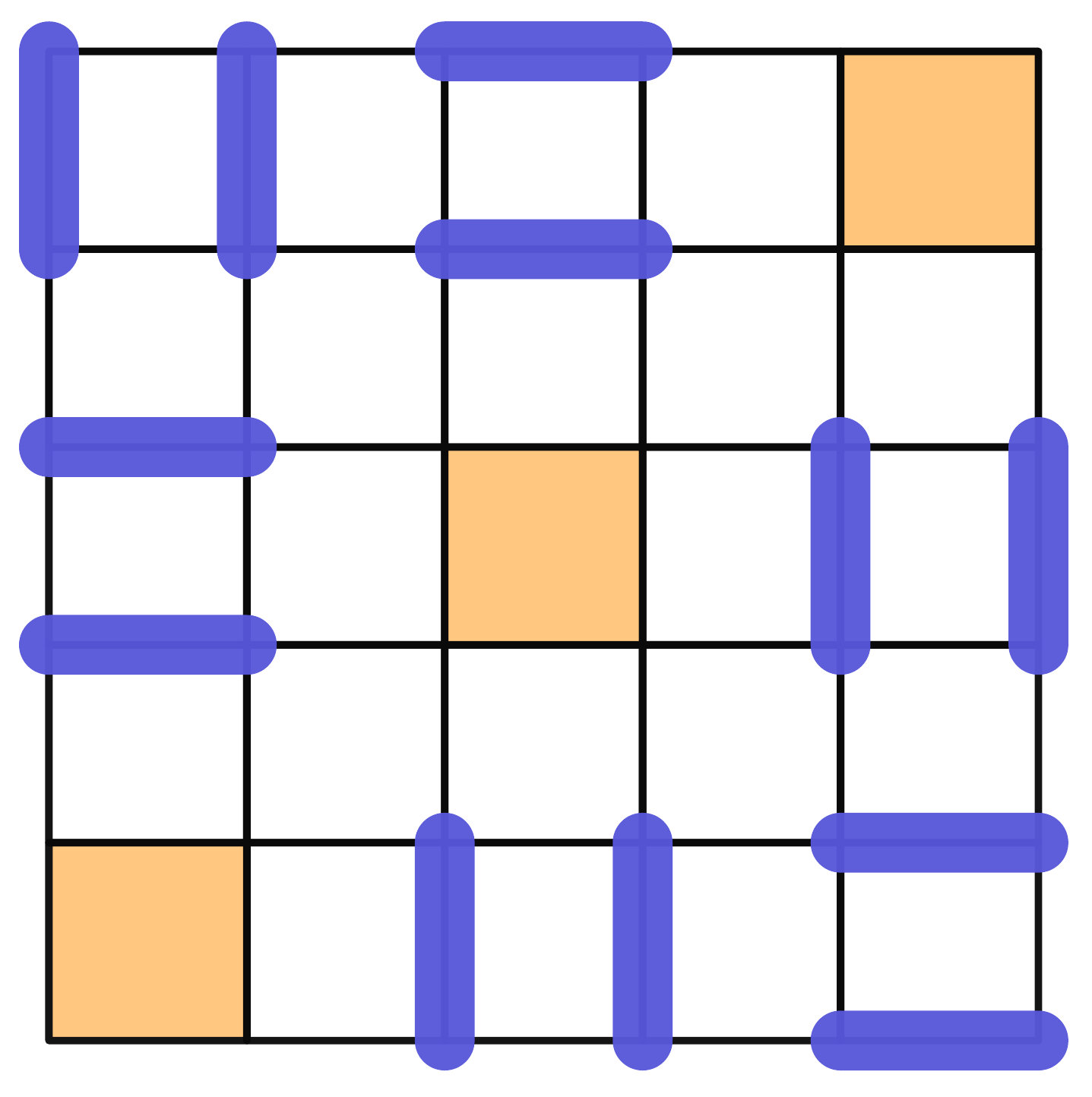}
}}
\right|
\right.
\left.
+
\left|
\vcenter{\hbox{  \includegraphics[scale = 0.1]{2020_Quantum_Dimer_m2_alt_c2}
}}
\right\rangle
\left\langle
\vcenter{\hbox{  \includegraphics[scale = 0.1]{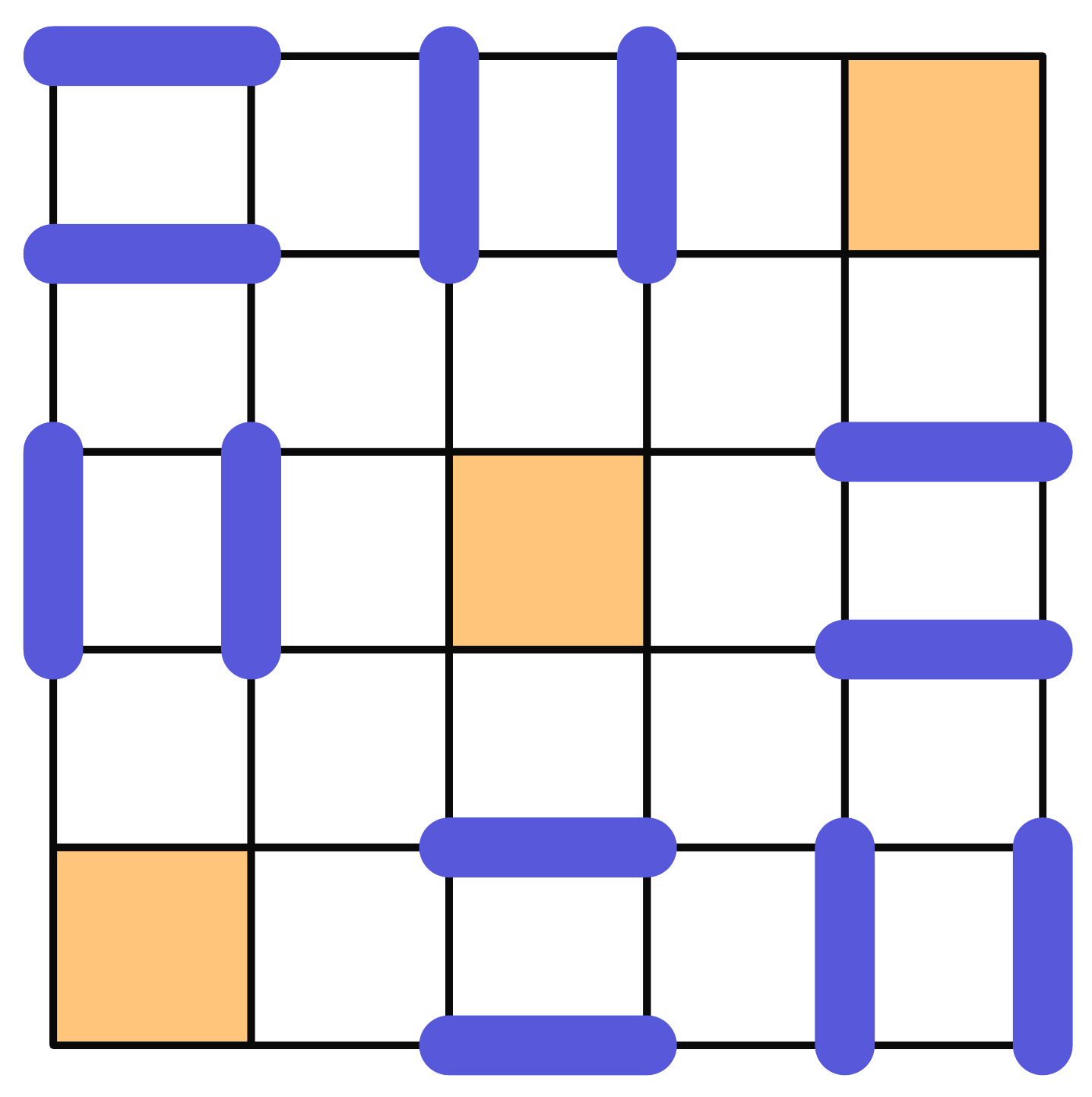}
}}
\right|
\,
\right)  \;.
%+ (\text{similar  terms})
\\
\Hdiag^{(2)}
=&
\, V \, \sum \left(
\,
\left|
\figeq{0.1}{2020_Quantum_Dimer_m2_alt_c1}
%\vcenter{\hbox{  \includegraphics[scale = 0.1]{2020_Quantum_Dimer_1_v2}
%}}
\right\rangle
\left\langle
\vcenter{\hbox{  \includegraphics[scale = 0.1]{2020_Quantum_Dimer_m2_alt_c1}
}}
\right|
\right.
\left.
+
\left|
\vcenter{\hbox{  \includegraphics[scale = 0.1]{2020_Quantum_Dimer_m2_alt_c2}
}}
\right\rangle
\left\langle
\vcenter{\hbox{  \includegraphics[scale = 0.1]{2020_Quantum_Dimer_m2_alt_c2}
}}
\right|
\,
\right) 
\end{aligned}
\;,
%+ (\text{similar  terms})
\ee
where again the sum is summing over all possible positions, and the orange plaquettes represent any dimer configurations. 
These two models, Eqs.~\eqref{eq:hQDM_m2_model} and \eqref{eq:hQDM_3by3_hop}, can both be considered minimal range model if the range of the model is parametrized by the largest Manhattan distance between any two plaquettes in the support of a term in the Hamiltonian. The model in Eq.~\eqref{eq:hQDM_3by3_hop} will have a shorter interaction range than Eq.~\eqref{eq:hQDM_m2_model},  if the interaction range is parametrized by the Euclidean distance instead.

\subsection{An example of longer-range hQDM with $m=0$}\label{app:hQDM_longer}
Here we explicitly write a longer-range (but still short-range) hQDM in Eq.~\eqref{eq:hQDM_gen_def} with $m=0$ and a maximum range of $\rmax = \sqrt{5}$ in Euclidean plaquette distance: %
\be
H^{(0)}_{\rmax= \sqrt{5}} =  
\Hres^{(0)}(r= 2)
+
\Hres^{(0)}(r= \sqrt{5})
+
\Hdiag^{(0)}(r= 2)
+
\Hdiag^{(0)}(r= \sqrt{5}) \;, 
\ee
where $\Hres^{(0)}(r= 2)$ and $\Hdiag^{(0)}(r= 2)$ are given in Eq.~\eqref{eq:hQDM_def2} and where
\be \label{eq:hQDM_knight}
\begin{split}
\Hres^{(0)}(r= \sqrt{5})
=&
-t \sum
\left(
\,
\left|
\vcenter{\hbox{  \includegraphics[scale = 0.1]{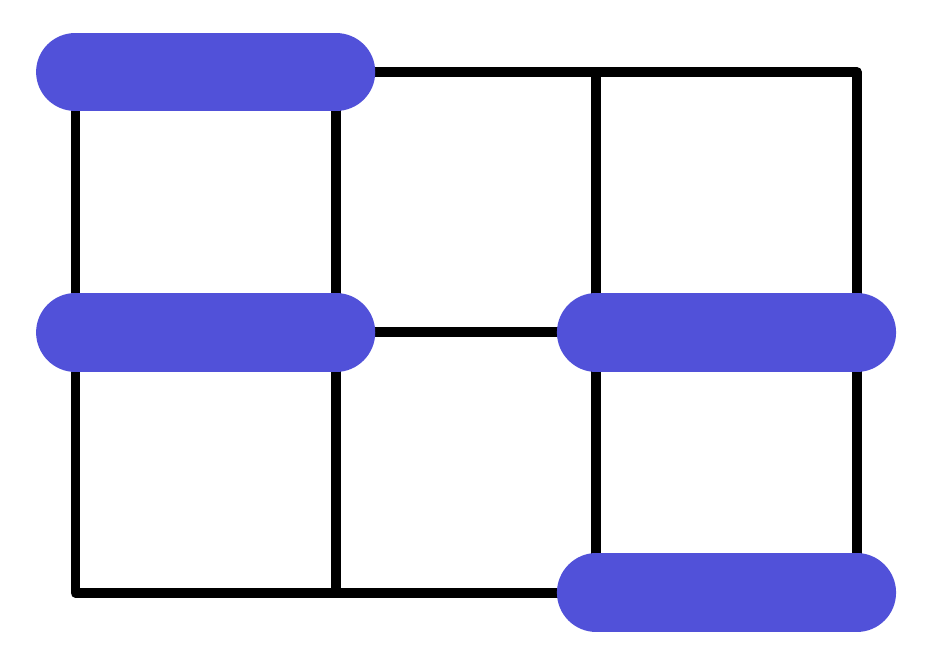}
}}
\right\rangle
\left\langle
\vcenter{\hbox{  \includegraphics[scale = 0.1]{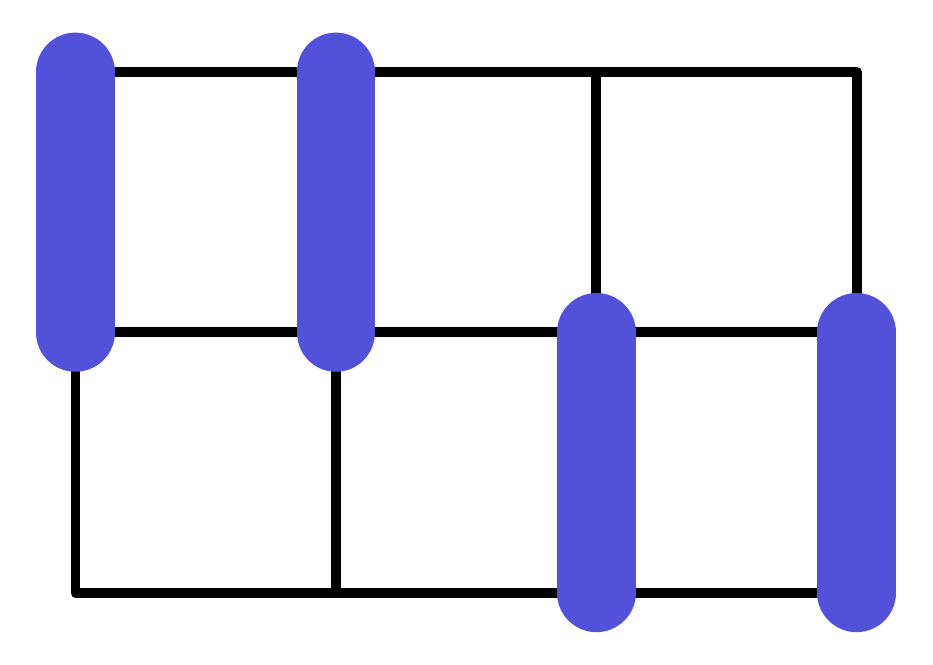}
}}
\right|
+
\left|
\vcenter{\hbox{  \includegraphics[scale = 0.1]{2022_Quantum_Dimer_knight2}
}}
\right\rangle
\left\langle
\vcenter{\hbox{  \includegraphics[scale = 0.1]{2022_Quantum_Dimer_knight1}
}}
\right|
%\,
%\right) 
\right)
%+ (\text{similar  terms})
\\
\Hdiag^{(0)}(r= \sqrt{5})
=&
\, 
V \, \sum \left(
\,
\left|
\figeq{0.1}{2022_Quantum_Dimer_knight1}
%\vcenter{\hbox{  \includegraphics[scale = 0.1]{2020_Quantum_Dimer_1_v2}
%}}
\right\rangle
\left\langle
\vcenter{\hbox{  \includegraphics[scale = 0.1]{2022_Quantum_Dimer_knight1}
}}
\right|
+
\left|
\vcenter{\hbox{  \includegraphics[scale = 0.1]{2022_Quantum_Dimer_knight2}
}}
\right\rangle
\left\langle
\vcenter{\hbox{  \includegraphics[scale = 0.1]{2022_Quantum_Dimer_knight2}
}}
\right|
\right)
%+ (\text{similar  terms})
\end{split}\;,
\ee
where the sum is over all possible knight moves on the square lattice (including knight moves of the opposite orientation which we omit for brevity).
%
%%%%%%%%%%%%%%%%%%%%%%%%%%%%%%%%%%%%%%%%%%
%%%%%%%%%%%%%%%%%%%%%%%%%%%%%%%%%%%%%%%%%%
%%%%%%%%%%%%%%%%%%%%%%%%%%%%%%%%%%%%%%%%%%
%%%%%%%%%%%%%%%%%%%%%%%%%%%%%%%%%%%%%%%%%%
%%%%%%%%%%%%%%%%%%%%%%%%%%%%%%%%%%%%%%%%%%

%%%%%%%%%%
%%%%%%%%%%
%%%%%%%%%%
%%%%%%%%%%
%%%%%%%%%%

\section{Analytically tractable limits in the phase diagram}
\subsection{Ground states as $V/t \to \infty$}

\begin{figure}[H]
\centering
\includegraphics[width=0.9 \textwidth  ]{./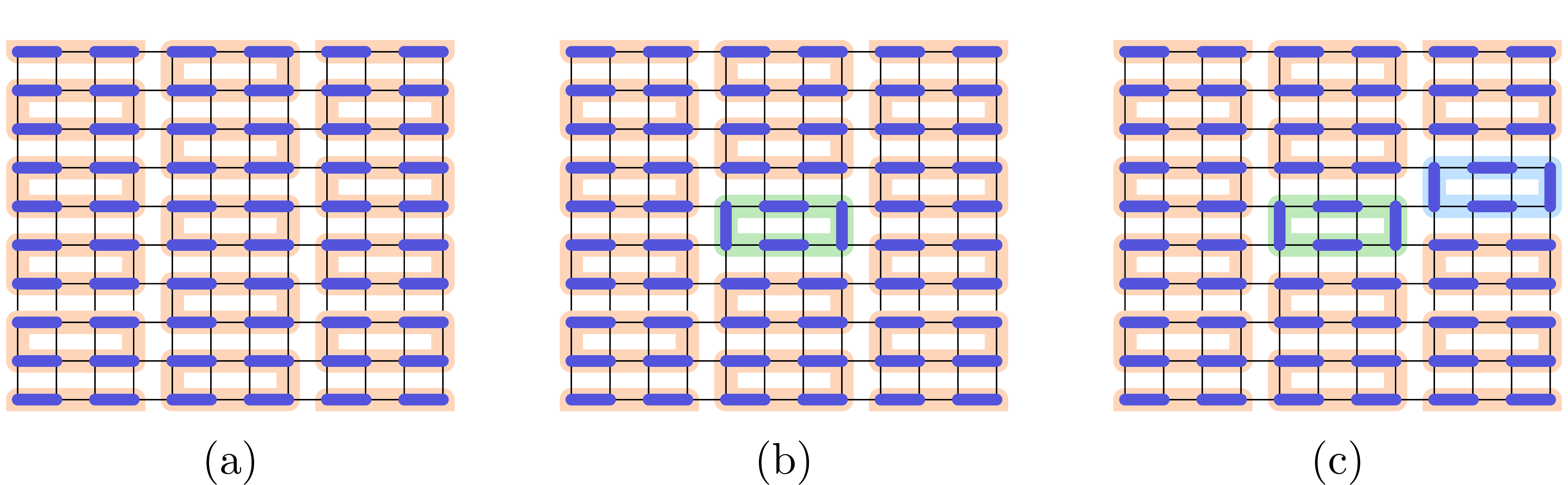}
    \caption{
    (a) The columnar state which has
    the maximal number of horizontal parallel dimer pairs.
    The orange lines indicate loops of links where loop updates can be performed to obtain a new state. A loop update turns occupied links to unoccupied links in the loop, and vice versa.
    (b) Along the green line, a loop update has been performed. No flippable plaquette pairs have been created.
    (c) Along the green and blue lines, loop updates has been performed. No flippable plaquette pairs have been created.
    } \label{fig:staggered_ground_state_extensive}
\end{figure}

\begin{proposition}\label{prop:gs_vt_infty}
The height-conserving qantum dimer model for general $m$ on the square lattice defined in Eq.~\eqref{eq:hQDM_gen_def_short_range} has at least $O\left(\alpha^{L^2} \right)$ degenerate ground states in the limit $V/t \to \infty$ for some constant $\alpha>1$. These ground states have zero energy, and are inactive, i.e. they belong to disconnected sectors in the Hilbert space.
\end{proposition}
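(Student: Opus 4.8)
### Proof proposal

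The plan is to establish the lower bound $O(\alpha^{L^2})$ on the ground-state degeneracy by an explicit combinatorial construction of zero-energy inactive configurations, following the strategy already sketched in Fig.~\ref{fig:staggered_ground_state_extensive} and the DWSD picture of Fig.~\ref{fig:subsector_puz}. First I would argue that in the limit $V/t\to\infty$ the ground-state energy is zero and is attained precisely by configurations with no flippable plaquette pairs: since $\Hres^{(m)}$ is off-diagonal and $\Hdiag^{(m)}$ is a non-negative diagonal operator (after suitable constant shift) counting flippable parallel-dimer-pair patterns, any configuration annihilated by $\Hdiag^{(m)}$ is also annihilated by $\Hres^{(m)}$ (the resonance term can only act where a flippable pattern exists), hence is an exact eigenstate of energy zero; and since $\Hdiag^{(m)}\ge 0$, no state has lower energy, so these inactive configurations are exactly the ground states. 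Such states are automatically frozen, i.e.\ each spans a Krylov subspace of size one, proving the "disconnected sectors" claim.

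Next I would produce exponentially many such inactive configurations. Starting from the columnar reference state of Fig.~\ref{fig:staggered_ground_state_extensive}(a), which has no flippable next-to-nearest-neighbour plaquette pair of the type appearing in \eqref{eq:hQDM_def}, I identify a family of $\Theta(L)$ disjoint (or more carefully, independently toggleable) loops --- the orange loops in the figure --- such that performing a loop update on any subset of them yields another fully-packed dimer configuration that still contains no flippable plaquette pair. The key local fact to check is that a loop update along one of these loops does not create, on either side of the loop or across it, any of the parallel-dimer-pair plaquette patterns that $\Hdiag^{(m)}$ penalizes; this is the content of panels (b) and (c), and it should reduce to a finite case check on the possible local environments of a loop segment. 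Given $c L$ independently switchable loops, one obtains at least $2^{cL}$ distinct inactive states; to upgrade this to $\alpha^{L^2}$ with $\alpha>1$ one tiles the lattice into $\Theta(L)$ horizontal strips, each of height $\Theta(1)$, runs the loop-switching construction within each strip so that each strip independently contributes a factor $2^{\Theta(L)}$, and checks that the strips do not interact --- e.g.\ by separating them with a staggered buffer row (a DWSD) that is itself inactive and blocks any Hamiltonian term from straddling two strips. Multiplying, the count is $\big(2^{\Theta(L)}\big)^{\Theta(L)} = \alpha^{L^2}$ for a suitable $\alpha>1$; and distinct choices give distinct dimer configurations, hence orthogonal states, and each is a separate size-one Krylov subspace, so they are mutually non-connected under $H$.

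Finally I would note that the argument must be made uniformly in $m$: the short-range hQDM $H^{(m)}$ of \eqref{eq:hQDM_gen_def_short_range} has a larger but still finite interaction support, so "inactive" now means annihilated by all the $\hr^{(m)},\hd^{(m)}$ and their barred partners; I would verify that the columnar state and its loop-updated descendants remain inactive for every $m$ --- intuitively because a columnar/staggered pattern has no parallel dimer pair anywhere, so every product operator $\hb^{(m)}$ (which requires at least one flippable plaquette in its support) annihilates it --- and that the buffer rows separating strips still function as DWSD blockades for the wider support, which only requires the buffer thickness to exceed $\rmax$ (a fixed constant for fixed $m$). The main obstacle I anticipate is the bookkeeping in the last two steps: showing that the loop updates within a strip are genuinely independent (no two of them, when simultaneously applied, conspire to create a flippable pattern) and that adjacent strips remain decoupled for the full support of $H^{(m)}$; both are finite local checks but need to be organized carefully so that the counting really yields an exponent $\Theta(L^2)$ rather than $\Theta(L)$.
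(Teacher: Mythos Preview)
Your overall strategy coincides with the paper's: start from the columnar reference state, perform loop updates on a family of independently toggleable loops, and verify locally that no flippable plaquette-pair pattern is ever created. The difference is that you have misread the orange loops of Fig.~\ref{fig:staggered_ground_state_extensive}. They are not $\Theta(L)$ winding loops but $\Theta(L^2)$ \emph{local} loops, each enclosing a ``brick'' of three plaquettes. The paper therefore gets the count $2^{\,\Theta(L^2)}=\alpha^{L^2}$ in one step, by checking (panels (b) and (c)) that toggling any single brick creates no vertical parallel-dimer plaquette, and that two neighbouring toggled bricks do not conspire to create one either. Your strip decomposition with DWSD buffer rows would also work, but it is extra machinery introduced only to repair an undercount that does not actually occur; once you recognize the bricks as local and extensive in number, the whole second half of your argument collapses to the paper's two-line local check.

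One small imprecision worth fixing: the columnar state \emph{does} have parallel dimer pairs on every plaquette (all horizontal). What it lacks is any plaquette carrying a \emph{vertical} parallel-dimer pair, and every term of $H^{(m)}$ in Eq.~\eqref{eq:hQDM_gen_def_short_range} requires at least one such vertical plaquette in its support. The relevant local check after a brick update is therefore exactly what the paper states: that no vertical parallel-dimer plaquette is produced, either inside a toggled brick or between two adjacent toggled bricks. This same observation handles general $m$ without further work, since the $2^{m+1}$-plaquette patterns of $H^{(m)}$ still require at least one vertical plaquette; no separate buffer-thickness argument in $\rmax$ is needed.
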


\begin{proof}
Consider the columnar state with the maximal number of horizontal parallel dimer pairs, as shown in Fig.~\ref{fig:staggered_ground_state_extensive}a, which contains no flippable plaquette pairs, i.e. it is a ground state in the limit of  $V/t \to \infty$.
Partition the configuration with strings (orange lines in Fig.~\ref{fig:staggered_ground_state_extensive}a).
One has the freedom to shift dimers within a string while creating no flippable vertical dimers, i.e. the new state is also a ground state.
To prove this, it suffices to see that no vertical dimer pairs are created within the rectangle where a string transformation has been made (green lines in Fig.~\ref{fig:staggered_ground_state_extensive}b), and that no vertical dimer pairs are created between vertical bonds after two separate string transformations have been made (green and light blue lines in Fig.~\ref{fig:staggered_ground_state_extensive}c).
This implies that there is at least  $O\left(\alpha^{ L^2} \right)$  ground states for some $\alpha >1$.
\end{proof}

\subsection{Ground states as $V/t \to - \infty$}\label{app:foursubsector}

\begin{figure}[H]
\centering
\includegraphics[width=0.9 \textwidth  ]{./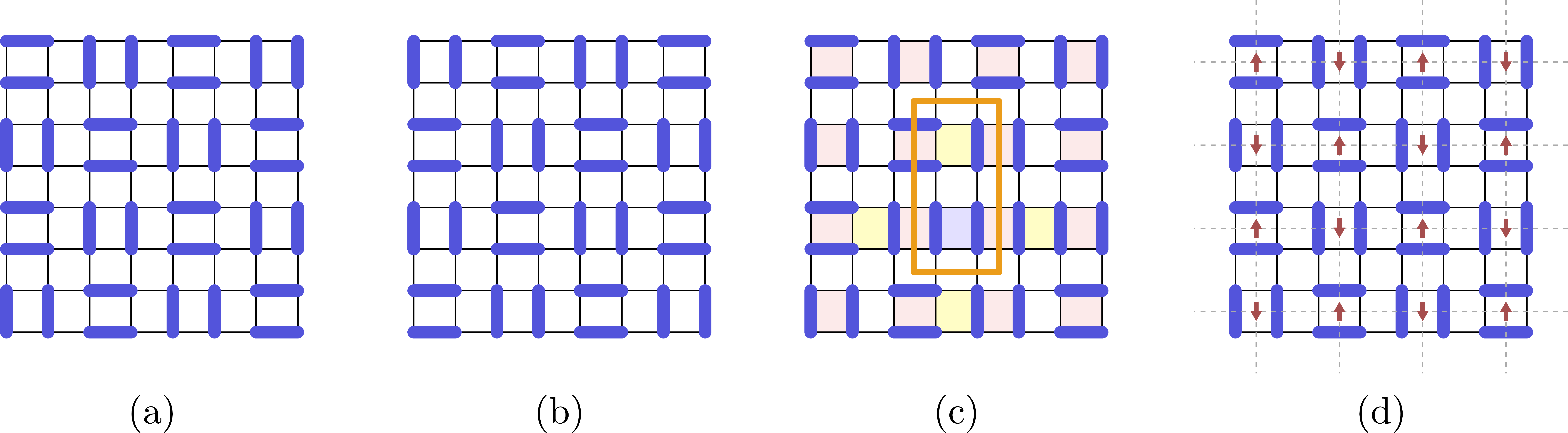}
    \caption{
    (a-b) Two ground states in the $V/t \to - \infty$ limit.
    These two ground states have the maximal number of flippable dimer pairs, which live in the same sublattices.
    (c) Red plaquettes are where the flippable dimer pairs live. The blue and yellow plaquettes can host flippable plaquettes, but not flippable plaquette pairs.
    (d) Mapping from the ground state in (a) to an antiferromagnet ground state in the 2D XXZ Model.
    } \label{fig:grid_ground_state_proof}
\end{figure}

\begin{proposition}
The height-conserving qantum dimer model for general $m$ on the square lattice defined in Eq.~\eqref{eq:hQDM_gen_def_short_range} has eight degenerate ground states in the limit $V/t \to -\infty$. The eight ground states belong to four separate disconnected sectors [whose physics is described by an emergent constrained spin models given in given in Proposition \ref{prop:mapping}].
\end{proposition}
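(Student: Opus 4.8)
\noindent\emph{Proof plan.} The plan is to reduce the statement to a combinatorial optimization and then to combine a rigidity argument with the fact that the dynamics preserves the sublattice on which the flippable plaquettes reside, which also underlies the mapping to the XXZ model (Proposition~\ref{prop:mapping}). I will carry it out for the minimal-range model ($m=0$); the general-$m$ case is analogous, with the diagonal term counting the number of $(m{+}1)$-plaquette patterns on which $\hat{d}^{(m)}$ fires, again maximized by the same configurations. In the limit $V/t\to-\infty$ the resonant term $\Hres$ is subleading, so the energy of a close-packed configuration $C$ is $E(C)=V\,N(C)$, where $N(C)$ is the number of vertically- or horizontally-aligned next-to-nearest-neighbour plaquette pairs on which a term of $\Hdiag$ in Eq.~\eqref{eq:hQDM_def} acts --- i.e.\ pairs of plaquettes at distance two in a common row or column that both carry a parallel dimer pair, with the two pairs of opposite orientation. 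Since $V<0$, the ground states are exactly the maximizers of $N(C)$, so it suffices to (a) compute $\max_C N(C)$, (b) enumerate the maximizers, and (c) sort them into Krylov subspaces.

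For (a): a plaquette carrying a parallel dimer pair has all four of its corner sites matched inside that plaquette, so two such plaquettes cannot share a site. Hence the set $S$ of plaquettes carrying a parallel dimer pair is independent for the ``king'' adjacency on the plaquette lattice, giving $|S|\le L^2$ on the $2L\times2L$ torus; and since each plaquette of $S$ belongs to at most four next-to-nearest-neighbour pairs, $N(C)\le 2|S|\le 2L^2$. This bound is saturated by the configuration of Fig.~\ref{fig:phasediag}(a): there $S$ is one of the four plaquette sublattices of Fig.~\ref{fig:phasediag}(b), the $2\times2$ site blocks spanned by $S$ tile the torus, and the two orientations are laid out in a checkerboard pattern, so every next-to-nearest-neighbour pair within $S$ is flippable and $N=2L^2$. (This requires $L$ even, so that the checkerboard closes on the torus; we assume this throughout, consistently with the $L\times L$ XXZ image.)

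For (b): one shows that $N(C)=2L^2$ forces $|S|=L^2$ and that every plaquette of $S$ lies in four flippable pairs. The first condition forces the $2\times2$ site blocks spanned by $S$ to tile the torus; the only freedom in such tilings is a brick-type shear between successive rows or columns of blocks, and any genuine shear destroys a finite fraction of the next-to-nearest-neighbour pairs within $S$, so maximality forces $S$ to be a translate of one of the four plaquette sublattices. The second condition then forces the orientation assignment on $S$ to be one of the two checkerboard patterns, since any monochromatic next-to-nearest-neighbour pair within $S$ fails to be flippable. This leaves exactly $4\times2=8$ maximizers: the four sublattice translates of Fig.~\ref{fig:phasediag}(a), together with the configuration obtained from each by simultaneously rotating the dimers on every plaquette of $S$. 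I expect this rigidity step --- ruling out sheared choices of $S$, excluding all non-checkerboard orientations, and handling the torus topology --- to be the main obstacle. It can alternatively be run in the height representation, where flippable plaquettes correspond to local extrema of $h(\xvec)$ and one maximizes the number of extrema at fixed tilt, pinning the height profile up to the same eightfold freedom.

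For (c): in each of the eight ground states the flippable plaquettes are exactly the plaquettes of $S$, and $S$ is a single plaquette sublattice $X\in\{\mathrm{A},\mathrm{B},\mathrm{C},\mathrm{D}\}$; the four translates realize the four choices of $X$, while the two checkerboard orientations leave $X$ fixed. Because the action of $H$ keeps the flippable plaquettes on their sublattice --- the fact that makes the mapping of Eqs.~\eqref{eq:xxz_mapping}--\eqref{eq:xxz_model} well defined --- any two of the eight ground states with different $X$ lie in different Krylov subspaces, so the eight states occupy at least four disconnected sectors. Conversely, for fixed $X$ the two ground states map under $g$ to the two N\'eel states of the $L\times L$ XXZ model of Proposition~\ref{prop:mapping}; these carry the same $\sum_{\vxt}\sz_{\vxt}$ and are connected by repeated action of the hopping term $-t\sum_{\langle\vxt,\vxt'\rangle}(\splus_{\vxt}\sminus_{\vxt'}+\sminus_{\vxt}\splus_{\vxt'})$, which is ergodic within a fixed-magnetization sector of the connected lattice; transporting this back through the mapping shows the two ground states lie in the \emph{same} Krylov subspace of $\hhQDM$. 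Hence the eight ground states fall into exactly four disconnected sectors, each governed by the emergent XXZ model, as claimed.
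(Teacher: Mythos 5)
Your proof is correct and follows the same overall skeleton as the paper's --- identify the ground states as the maximizers of the flippable-pair count, show there are eight of them, and sort them into four Krylov sectors using the fact that the dynamics keeps flippable plaquettes on a fixed sublattice --- but the two middle steps are argued quite differently. For the enumeration, the paper simply describes a greedy placement procedure and asserts that it yields ``eight and only eight'' maximizers, whereas you derive an explicit upper bound $N(C)\le 2|S|\le 2L^2$ from king-independence of parallel-pair plaquettes and then argue rigidity; your version is the more rigorous one, and the step you flag as ``the main obstacle'' closes more cleanly than your brick-shear route suggests: saturation forces every plaquette of $S$ to have all four next-to-nearest neighbours in $S$ (with opposite orientation), so $S$ is closed under translation by $(\pm2,0)$ and $(0,\pm2)$ and, having exactly $L^2$ elements, must be a single coset, i.e.\ one of the four sublattices with a checkerboard orientation. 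For the sector structure, the paper argues directly on the dimer configurations (the red/blue/yellow plaquette analysis of Fig.~\ref{fig:grid_ground_state_proof}(c), showing no flippable pair can ever form off the chosen sublattice), while you transport the question through the mapping of Proposition~\ref{prop:mapping} and invoke connectivity of the fixed-magnetization sector of the hard-core hopping model to link the two N\'eel images. One caution on that last move: in the paper, Proposition~\ref{prop:mapping} is itself proved by citing the sublattice-preservation fact established in this very proposition, so leaning on it here risks circularity; to be safe you should prove sublattice preservation directly (the paper's local argument, or an equivalent one) rather than import it from the mapping. With that caveat addressed, your argument is complete and in places sharper than the published one.
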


\begin{proof}
Suppose that linear system size  $L$ is a multiple of 4 and the periodic boundary condition is imposed.
We first note that there are maximally $  L^2 /2$ flippable plaquette pairs on a given square lattice. %
To find the 8 ground states, without loss of generality, we place a parallel pair of dimers on the top left plaquette of the square lattice.
We then place the dimer pairs on next-to-nearest-neighbour plaquettes such that these plaquettes form flippable plaquette pairs.
We repeat the above procedure except we change the initial plaquette position and parallel dimer pairs orientation.
This allow us to identify eight and only eight distinct states that maximize the number of flippable plaquette pairs.
Two of the  eight ground states are shown in Fig.~\ref{fig:grid_ground_state_proof} a and b.

We now show that the two degenerate ground states shown in  Fig.~\ref{fig:grid_ground_state_proof}a and \ref{fig:grid_ground_state_proof}b belong to the same subsector in the Hilbert space (since the state  Fig.~\ref{fig:grid_ground_state_proof}a  can be evolved under the time evolution of $H$ into  Fig.~\ref{fig:grid_ground_state_proof}b),
but they are not connected to the other six ground states (not shown) which are related to the first two ground states by shifting the entire pattern one plaquette to the right, bottom, bottom right.
To show this, we note that under the time evolution governed by the hQDM, ground states shown in
Fig.~\ref{fig:grid_ground_state_proof}a and \ref{fig:grid_ground_state_proof}b can only have pair plaquette flips on the red plaquettes in Fig.~\ref{fig:grid_ground_state_proof}c.
Suppose the ground state has undergone pair plauqette flips on red plauqettes only, and it is only possible to have a parallel dimer pair located directly in between two next-to-nearest-neighbor red plaquettes.
Without loss of generality, suppose this parallel dimer pair is located at the blue plaquette in Fig.~\ref{fig:grid_ground_state_proof}c.
To have a pair plaquette flip outside of the red plquettes, we must perform a pair plaquette flip on the blue plaquettes and one of the yellow plaquettes, e.g. the colored plaquettes in the orange loop.
In order for this plaquette flip to be possible, one of the yellow plaquettes must have a parallel dimer configuration of the opposite orientation from the parallel dimer pair at the blue plaquette.
This is impossible since we have only flipped red plaquettes. Therefore, the groundstate in Fig.~\ref{fig:grid_ground_state_proof}a is only connected to the one in Fig.~\ref{fig:grid_ground_state_proof}b.
This argument can be analogously applied to the other pairs of ground states, and general values of $m$.
\end{proof}

%%%%%%%%%%%%%%%%%%%%%%%
%%%%%%%%%%%%%%%%%%%%%%%

\section{Structure of Hilbert space}\label{app:subsector}
\subsection{Domain walls of staggered dimers (DWSD) as blockades}\label{app:dwsd}
\begin{figure}[H]
\centering
\includegraphics[width=0.7\textwidth  ]{./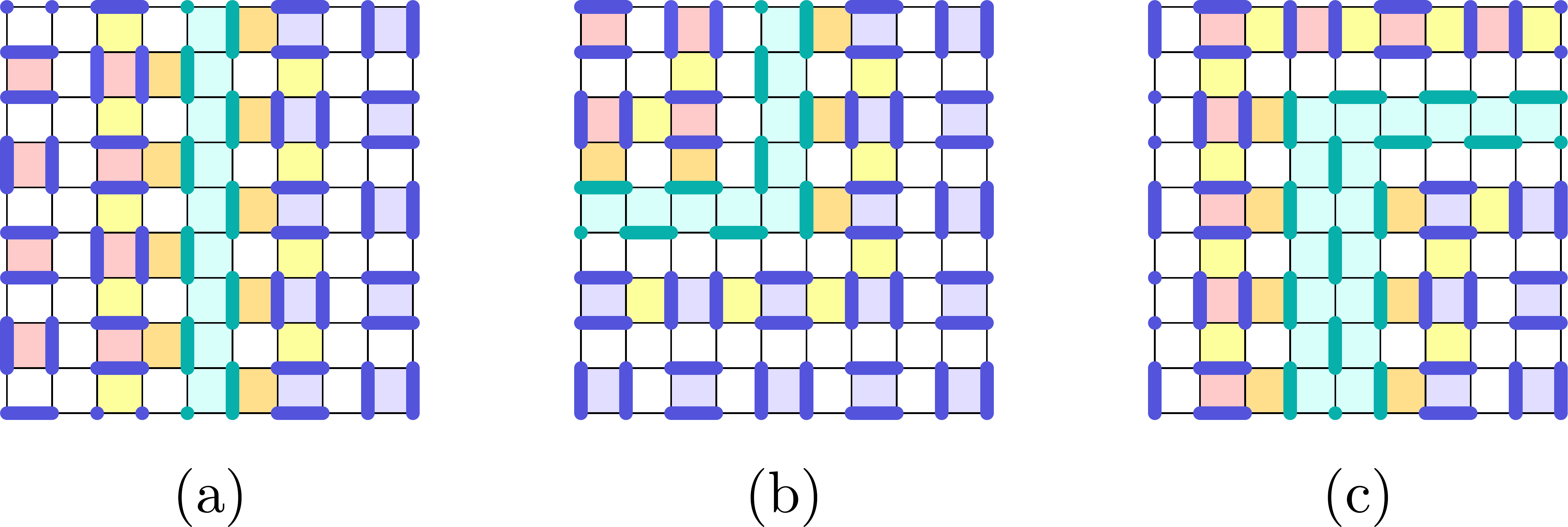}
    \caption{
    (a) A domain wall of stagered dimers (DWSD) in cyan, with flippable plaquette pairs only on sublattice A on the left of DWSD, and on B on the right o DWSD.
    (b) DWSD with a turn.
    (c) DWSD with a turn and with various thicknesses.
    } \label{fig:dwsd}
\end{figure}

Here we will define a region of dimers which can be used as ``blockades'' such that no terms in the Hamiltonian of hQDM can act non-trivially in that region. These concepts will be used in other sections of Appendix \ref{app:subsector} to construct Hilbert space subsectors.
\begin{definition}
A domain wall of staggered dimers (DWSD) is a connected region of plaquettes, where no plaquettes are occupied by parallel dimer pairs.  
\end{definition}
Examples of DWSD is given in Fig.~\ref{fig:dwsd} in cyan. 
\begin{proposition}
Consider the The height-conserving qantum dimer model for general $m$ on the square lattice defined in Eq.~\eqref{eq:hQDM_gen_def_short_range}.
Suppose a DWSD, $W$, separates two regions of plaquettes $R_1$ and $R_2$, such that $R_1$ and $R_2$ contain flippable plaquette pairs in only one of the four sublattices $\{A, B, C, D\}$. 
No terms in the Hamiltonian of hQDM can act non-trivially on regions of plaquettes in both $W$ and $R_1$, and on regions of plaquettes in both $W$ and $R_2$. In other words, DWSD acts as a blockade between $R_1$ and $R_2$.
\end{proposition}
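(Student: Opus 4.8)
The plan is to show that any Hamiltonian term has its support contained entirely within a single one of the regions $R_1$, $R_2$, or entirely ``inside'' the DWSD $W$ in a sense that acts trivially, so that no term can straddle $W$ and either $R_i$. First I would recall from the construction in Appendix~\ref{app:hQDM_overall} that every term of $H^{(m)}$ (resonant or diagonal) is built from height-flip operators $\hr_\vx$, $\br_\vx$, $\hd_\vx$, $\bd_\vx$, and that each such factor acts nontrivially on a plaquette $\vx$ \emph{only if that plaquette currently hosts a parallel dimer pair}: indeed $\hr_\vx = \vv_\vx \bvv_\vx^\dagger$ and $\hd_\vx = \vv_\vx\vv_\vx^\dagger$ both annihilate any plaquette state that is not one of the two parallel-pair configurations $\vv_\vx$ or $\bvv_\vx$. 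Hence a product operator $\hb^{(m)}(\vx;\vvl_m)$ gives zero on any configuration in which \emph{any} plaquette in its support $\mathcal{S}(\vx;\vvl_m)$ fails to be a parallel dimer pair.

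The key step is then: since $W$ is a DWSD, by definition no plaquette of $W$ hosts a parallel dimer pair, for \emph{any} dimer configuration consistent with $W$ being present. Therefore, if a term's support $\mathcal{S}(\vx;\vvl_m)$ intersects $W$, that term annihilates the state, i.e.\ it acts trivially. So the only terms that can act nontrivially have support disjoint from $W$. It remains to argue such a support cannot intersect both $R_1$ and $R_2$. Here I would invoke that $W$ \emph{separates} $R_1$ from $R_2$ (it is a connected region of plaquettes forming a wall), so any connected set of plaquettes disjoint from $W$ lies entirely in $R_1$ or entirely in $R_2$; and the support of each Hamiltonian term, together with its minimal enclosing connected region of plaquettes (the quantity $A[\mathcal{S}]\le \Amax$, or equivalently $r[\mathcal{S}]\le\rmax$ in the range-restricted sum), is connected and of bounded size, small enough that it cannot ``jump across'' $W$ — this is where one uses that a DWSD has thickness at least that of a single plaquette row/column (cf.\ Fig.~\ref{fig:dwsd}), which exceeds the hop ranges appearing in the minimal-range (and the stated longer-range) models. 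One concludes $\mathcal{S}(\vx;\vvl_m)\subseteq R_1$ or $\subseteq R_2$ for every nontrivial term, which is exactly the blockade claim.

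I would organize the write-up as: (1) Lemma: each factor $\hb_\vx$ kills any state whose plaquette $\vx$ is not a parallel pair, hence each product term kills any state whose support touches $W$; (2) topological/geometric step: a DWSD of the stated minimal thickness separates the plaquette lattice, so any connected plaquette region of diameter $\le\rmax$ disjoint from $W$ lies within a single $R_i$; (3) combine: every term either annihilates the state or has support within one $R_i$, so $H = H|_{R_1} \oplus (\text{terms acting only in }W,\text{ trivially}) \oplus H|_{R_2}$ up to the trivial pieces, establishing that $W$ is a blockade and $R_1,R_2$ are dynamically disconnected. The main obstacle I anticipate is step (2): making precise the claim that the DWSD is ``thick enough'' relative to the interaction range so no allowed term can bridge it — this needs a careful statement of the minimal width required of a DWSD (it must be at least as wide as the largest term support in the model, which for the minimal-range $m=0$ model means one cannot have a term's two plaquettes on opposite sides of a single-plaquette-wide strip, but for longer-range models like the $r=\sqrt5$ ``knight move'' model of Eq.~\eqref{eq:hQDM_knight} one genuinely needs a wider wall). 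Everything else — the vanishing of $\hr,\hd$ off parallel-pair states, and the additivity of the Hamiltonian over disjoint supports — is routine given the explicit operator definitions.
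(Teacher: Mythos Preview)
Your step~(1) is correct and is cleaner than the paper's argument: since each factor $\hr_\vx,\br_\vx,\hd_\vx,\bd_\vx$ projects onto a parallel-dimer-pair state at $\vx$, any term whose support meets $W$ annihilates the state, so no term acts nontrivially on both $W$ and $R_i$. The paper arrives at the same endpoint but through a positional case analysis (red/yellow/orange plaquettes on the edge of $R_i$, referring to Fig.~\ref{fig:dwsd}) that explicitly invokes the one-sublattice hypothesis on $R_1,R_2$---a hypothesis you never use.

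That omission is precisely the gap in your step~(2). The support of a minimal-range $m=0$ term is $\{\vx,\vx+2\evec_i\}$, two \emph{non-adjacent} plaquettes; the intermediate plaquette $\vx+\evec_i$ is not in the operator support (it is fixed by the dimer constraint and is in fact never a parallel pair). Hence one can have $\vx\in R_1$, $\vx+\evec_i\in W$, $\vx+2\evec_i\in R_2$ with the term's support disjoint from $W$---your connectedness/thickness argument does not exclude this, and the DWSD used throughout the paper (Figs.~\ref{fig:subsector_puz} and~\ref{fig:dwsd}) are one plaquette wide, so requiring width $\geq 2$ is not the intended fix. What does exclude it is the sublattice hypothesis: $\vx$ and $\vx+2\evec_i$ always lie on the \emph{same} $2\times 2$ sublattice, so if the parallel-pair plaquettes of $R_1$ live on sublattice $A$ and those of $R_2$ on some $X\neq A$, a straddling term cannot find parallel dimers on both ends. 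This, not wall thickness, is the mechanism the paper's edge analysis encodes; you should reinstate that hypothesis rather than replace it by a width condition.
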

\begin{proof}
By construction, $R_1$ only contains flippable plaquette pairs in only one sublattice, say sublattice A. 
On the edge of region $R_1$, parallel plaquette pairs can be formed on sublattice A (e.g. red plaquettes in Fig.~\ref{fig:dwsd}), on plaquettes in between sublattice A along the edge (e.g. yellow plaquettes in Fig.~\ref{fig:dwsd}), on plaquettes in between sublattice A and DWSD (e.g. orange plaquettes in Fig.~\ref{fig:dwsd}). 
By construction of the DWSD, there are no parallel plaquettes pairs that can form with plaquettes in yellow or orange.
Therefore, there are no parallel plaquettes in DWSD, and therefore no terms in the Hamiltonian of hQDM can act non-trivially on the plaquettes in both region $R_1$ and $W$. 
Similar arguments apply to $R_2$ and $W$. 
Examples are given in Fig.~\ref{fig:dwsd}.
\end{proof}

\subsection{  Exponential number of Krylov subspace of size $1$ }
\begin{proposition}
The height-conserving qantum dimer model for general $m$ on the square lattice defined in Eq.~\eqref{eq:hQDM_gen_def_short_range} has $O\left(\alpha^{ L^2}\right)$ number of Krylov subspace of size $1$ for some constant $\alpha>1$.
\end{proposition}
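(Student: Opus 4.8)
The plan is to exhibit an explicit family of $O(\alpha^{L^2})$ dimer configurations, each of which is inactive (annihilated by $\Hres^{(m)}$ up to a constant), so that each one spans its own Krylov subspace of size $1$. The natural starting point is the columnar ``staggered-type'' configuration already used in Proposition~\ref{prop:gs_vt_infty}: it has no flippable plaquette pairs, hence no resonance term $\hr^{(m)}, \br^{(m)}$ acts on it, so $\mk(H, \ket{\psi}) = \mathrm{span}\{\ket{\psi}\}$ has $\dk = 1$. The task is then to produce exponentially many such inactive configurations.

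First I would recall from Proposition~\ref{prop:gs_vt_infty} and Fig.~\ref{fig:staggered_ground_state_extensive} that one may partition the columnar state into disjoint strings (the orange loops) and independently perform a loop update on each string without ever creating a flippable plaquette pair: the key local check is that within a single string-rectangle no parallel vertical dimer pair appears, and that two adjacent modified strings still produce no parallel pair across their shared vertical bonds. I would promote this to the higher-$m$ models by noting that any term in $\Hres^{(m)}$ in Eq.~\eqref{eq:hQDM_gen_def_short_range} is built (via the $\hb^{(m)}$ construction in Eq.~\eqref{eq:b_op}) from a product of single-plaquette height-flip operators $\hr,\br$, each of which requires a parallel dimer pair on the plaquette it acts on; since none of these configurations contains any parallel dimer pair at all, \emph{every} term of $H^{(m)}$ acts on them only through the diagonal pieces $\hd^{(m)},\bd^{(m)}$, which are diagonal, so each configuration is a Krylov subspace of size $1$ regardless of $m$ or the interaction range.

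Next I would do the counting. There are $\Theta(L)$ independent strings in the partition of the columnar state, and each string admits at least two inequivalent configurations (the original and the loop-updated one) that remain collectively inactive; more carefully, one can choose, say, every other string to be in one of two states freely, giving at least $2^{cL}$ configurations — but to reach $\alpha^{L^2}$ one instead uses a two-dimensional version of the construction (stacking inactive blocks surrounded by DWSD, as in Fig.~\ref{fig:subsector_puz}(c,d) and the DWSD blockade proposition of Appendix~\ref{app:dwsd}): partition the $L\times L$ lattice into $\Theta(L^2)$ small regions, each of which can independently be put into one of $\gamma>1$ inactive local patterns separated by DWSD blockades, yielding $\gamma^{\Theta(L^2)} = \alpha^{L^2}$ distinct globally inactive configurations for some $\alpha>1$. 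The DWSD-as-blockade proposition guarantees that modifying one block never induces a Hamiltonian term spanning a neighboring block, so inactivity is preserved blockwise.

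The main obstacle is the bookkeeping needed to make the ``$\alpha^{L^2}$'' lower bound airtight rather than merely ``$\alpha^{L}$'': one must verify that the independent local choices really are mutually compatible in the sense of producing no flippable pair anywhere — including at every interface between two independently-chosen blocks — and that distinct choice-vectors give genuinely distinct dimer configurations (no accidental coincidences), so that the number of size-$1$ Krylov subspaces, not just the number of configurations, is $O(\alpha^{L^2})$. I expect this to reduce to a finite case analysis on the boundary geometry of the chosen inactive tiles and their DWSD collars, exactly analogous to the green/light-blue interface check in the proof of Proposition~\ref{prop:gs_vt_infty}, together with the observation that the DWSD pattern itself is rigid enough to serve as a fixed scaffold that the $\gamma^{\Theta(L^2)}$ interior choices decorate injectively.
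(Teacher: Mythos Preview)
The paper's proof is a one-line reduction: a size-$1$ Krylov subspace is precisely a frozen (inactive) configuration, and these are exactly the ground states in the $V/t\to\infty$ limit, so the statement is equivalent to Proposition~\ref{prop:gs_vt_infty}. Your overall strategy---exhibit exponentially many inactive configurations---matches this, but two steps need correction.

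First, your claim that ``none of these configurations contains any parallel dimer pair at all'' is false: the columnar state already has a parallel \emph{horizontal} dimer pair on every other plaquette, and the loop-updated variants retain many horizontal pairs. What Proposition~\ref{prop:gs_vt_infty} actually verifies is that no \emph{vertical} parallel pair is created. This is enough: every resonance term $\hr^{(m)},\br^{(m)}$ built via Eq.~\eqref{eq:b_op} contains both $\hr$-type and $\br$-type single-plaquette factors supported on the same bipartite sublattice, and hence always requires at least one vertical pair and one horizontal pair to act nontrivially. So your extension to general $m$ is salvageable with the weaker hypothesis ``no vertical parallel pair'', but as written it rests on a premise that already fails for the columnar state itself.

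Second, you undercount the independent loops. The orange loops in Fig.~\ref{fig:staggered_ground_state_extensive}(a) each enclose a three-plaquette ``brick'' (the proof of the size-$2$ Krylov-subspace proposition says this explicitly), so there are $\Theta(L^2)$ of them, not $\Theta(L)$. The local checks in Fig.~\ref{fig:staggered_ground_state_extensive}(b,c) show that bricks can be flipped independently without producing a vertical pair, giving $2^{\Theta(L^2)}$ inactive configurations directly. Your DWSD-block construction would also work, but it is an unnecessary detour here: the simple brick count already delivers $\alpha^{L^2}$.
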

\begin{proof}
A frozen or inactive state is a state that is annihilated by all terms in the Hamiltonian. 
Such a state is also a ground state in the limit of $V/t\to \infty$, which has no flippable plaquette pairs. 
Hence the claim of  $O\left(\alpha^{ L^2}\right)$ number of Hilbert space  subsectors of size $1$ for $\alpha >1$ is equivalent to proposition \ref{prop:gs_vt_infty}.
\end{proof}

\subsection{ Exponential number of Krylov subspace of size $O(1)$}

\begin{figure}[H]
\centering
\includegraphics[width=0.9 \textwidth  ]{./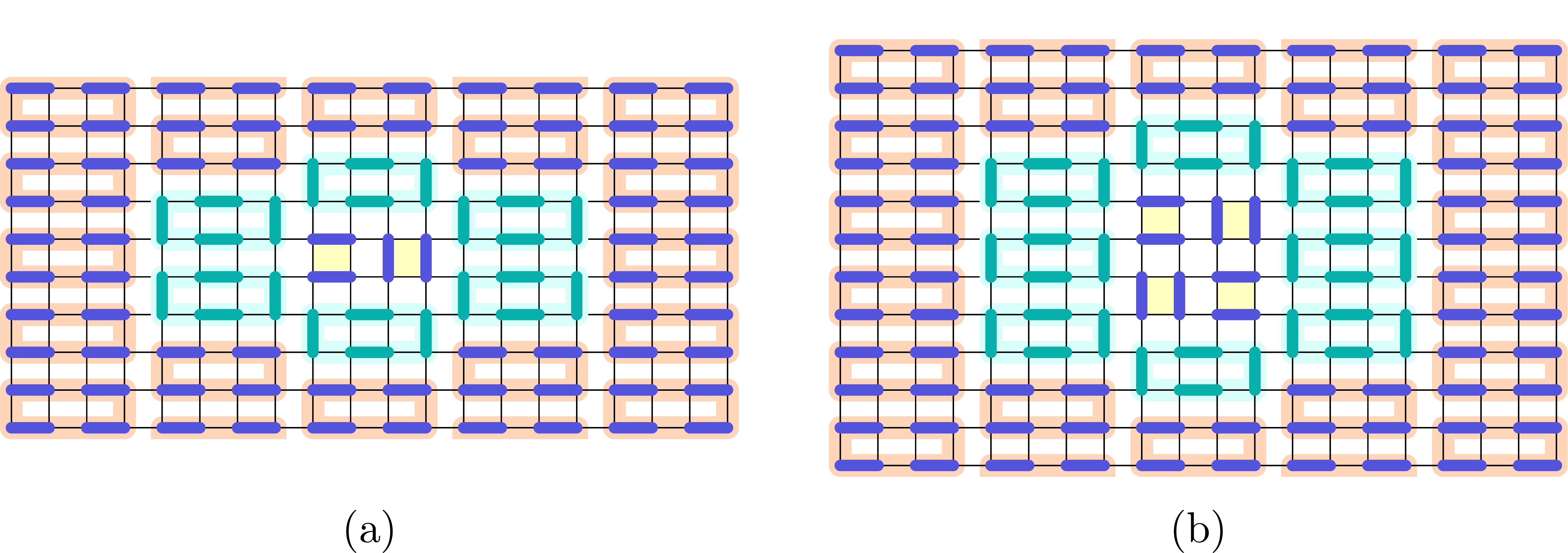}
    \caption{
    (a) $O(\exp L^2)$ number of subsectors of two configurations for $m=0$. 
    The Hamiltonian can act non-trivially to the two plaquettes in yellow.
    The cyan regions are constructed to ensure the active region is localized.
    Other subsectors of two configurations can be generated by performing loop updates on orange lines outside of the active region (see \ref{fig:staggered_ground_state_extensive}).
    (b) $O(\exp L^2)$ number of subsectors of size two for $m=1$.
    } \label{fig:1dim_sbusector}
\end{figure}

\begin{proposition}
The height-conserving qantum dimer model for general $m$ on the square lattice defined in Eq.~\eqref{eq:hQDM_gen_def_short_range} has $O\left(\alpha^{ L^2}\right)$ number of Krylov subspace of size $2$ for some constant $\alpha>1$.
\end{proposition}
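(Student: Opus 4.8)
The plan is to mimic the proof of the preceding proposition (the one establishing $O(\alpha^{L^2})$ Krylov subspaces of size $1$), but starting from a seed configuration that contains a \emph{single, spatially localized active region} supporting exactly one nontrivial $H$-move, rather than a fully frozen state. Concretely, I would first exhibit one reference configuration $C_0$ on the $L\times L$ torus with the following structure: a small ``active patch'' of plaquettes on which precisely one resonant term of $\Hres^{(m)}$ in Eq.~\eqref{eq:hQDM_gen_def_short_range} acts nontrivially (for $m=0$ this is the pair of yellow plaquettes in Fig.~\ref{fig:1dim_sbusector}(a); for $m=1$ the analogous patch in Fig.~\ref{fig:1dim_sbusector}(b)), completely encircled by a domain wall of staggered dimers (DWSD). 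By the blockade proposition of Appendix~\ref{app:dwsd}, the DWSD guarantees that no term of $H$ couples the interior patch to the exterior; thus the Krylov subspace $\mk(H, \ket{C_0})$ consists only of $C_0$ and its image under the single admissible flip, $C_0'$. One must also check that no \emph{other} term of $H$ acts on the patch — i.e.\ that the patch plus its staggered collar is genuinely ``minimally active'' — which is a finite local check on the geometry, done once.

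Next I would count. Outside the DWSD-enclosed patch, fill the remainder of the lattice with a staggered/columnar dimer background of the type used in Proposition~\ref{prop:gs_vt_infty}, and invoke exactly the same string/loop-update argument: one can perform independent loop updates along $O(L^2/\text{const})$ disjoint strings in the inert background without ever creating a flippable plaquette pair, hence without enlarging the active region or connecting to other sectors. Each independent binary choice multiplies the count, giving at least $\gamma^{cL^2}$ distinct \emph{background} configurations, each of which — when glued to the fixed active patch — produces a distinct dimer configuration whose Krylov subspace has size exactly $2$. Distinctness of the resulting Krylov subspaces follows because two such configurations differ on the inert region, which is frozen under $H$, so no sequence of $H$-moves can carry one to the other; and the orbit of each is just $\{C, C'\}$ with $C'$ differing from $C$ only on the active patch. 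Setting $\alpha = \gamma^{c} > 1$ yields the claimed $O(\alpha^{L^2})$ count.

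The main obstacle — and the only place real care is needed — is the interface between the active patch's staggered collar and the loop-update-able background: I must show that performing loop updates arbitrarily close to (but outside) the DWSD collar never spawns a new flippable plaquette pair adjacent to the collar, which would either enlarge the active region (ruining $\dk = O(1)$) or merge sectors. This requires verifying that the collar can be made thick enough (cf.\ the variable-thickness DWSD in Fig.~\ref{fig:dwsd}(c)) that the loop-update region is separated from the active patch by at least the interaction range $\rmax$, and that the collar itself remains a valid DWSD (no parallel dimer pairs) under all these background updates — which it does, since the updates act only on the strictly exterior columnar region. A secondary bookkeeping point is ensuring the geometry is compatible with the torus and with $L$ in the appropriate residue class (e.g.\ $L$ a multiple of $4$, as in the $V/t\to\pm\infty$ propositions); this only affects the constant $c$, not the exponential rate. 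Finally, the argument should be stated for general $m$ by noting that the DWSD blockade proposition and the columnar-background loop-update argument are both already established for general $m$, and only the shape of the minimally active patch changes with $m$.

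\begin{proof}
See the plan above; the details parallel Proposition~\ref{prop:gs_vt_infty} and the DWSD blockade proposition, with the seed configuration of Fig.~\ref{fig:1dim_sbusector} replacing the frozen seed.
\end{proof}
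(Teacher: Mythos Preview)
Your proposal is correct and follows essentially the same construction as the paper: a single minimally active patch (the green/yellow plaquettes of Fig.~\ref{fig:1dim_sbusector}) surrounded by a frozen cyan collar, embedded in the columnar background of Proposition~\ref{prop:gs_vt_infty}, with exponentially many distinct sectors generated by the same orange-loop updates on the inert exterior. Your treatment is in fact more explicit than the paper's about the interface subtlety and about invoking the DWSD blockade as a lemma, but the argument is the same.
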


\begin{proof}
We explicitly construct $O\left(\alpha^{ L^2}\right)$ number of Hilbert space subsectors of size $2$ for  hQDM in Eq.~\eqref{eq:hQDM_gen_def_short_range} with general $m$ and $\alpha>1$.
Consider first the case of $m=0$. We start with Fig.~\ref{fig:staggered_ground_state_extensive}a, and call each region of three plaquettes enclosed by an orange loop a ``brick''.
We replace a brick of three plaquettes with a flippable plaquette pair as in Fig.~\ref{fig:1dim_sbusector}a (green plaquettes). 
We arrange all bricks around the first brick with dimer configuration given in cyan in  Fig.~\ref{fig:1dim_sbusector}a. 
This configuration belongs to a Hilbert space subsector of dimension two, since there is a single active but isolated brick. 
We generate $O\left(\alpha^{ L^2}\right)$ number of such subsectors by performing loop updates (recall that a loop update turns occupied links to unoccupied links in the loop, and vice versa.) to orange loops in Fig.~\ref{fig:1dim_sbusector}a. 

For $m=1$, we can apply the same procedure to construct a Hilbert space subsector of size 2 in Fig.~\ref{fig:1dim_sbusector}b. Again, $O\left(\alpha^{ L^2}\right)$ number of such subsectors can be generated by performing loop updates orange loops. The cases for higher $m$ can be straightforwardly generalized.
\end{proof}

\subsection{Exponential number  of Krylov subspace of size $O(\alpha^{L^2})$ with $\alpha>1$}

\begin{proposition}
The height-conserving qantum dimer model for general $m$ on the square lattice defined in Eq.~\eqref{eq:hQDM_gen_def_short_range} has  $O(\gamma^{L^2})$ number of Hilbert space subsectors with size $O(\alpha^{L^2})$ with some constants $\alpha,\gamma > 1$.
\end{proposition}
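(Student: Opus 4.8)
The plan is to \emph{superpose} the two exponential constructions already in hand: the exponentially large family of \emph{inactive} background configurations from Proposition~\ref{prop:gs_vt_infty} will be used to \emph{label} exponentially many distinct subsectors, while a macroscopic number of mutually decoupled \emph{active} regions — each a copy of the size-$2$ Krylov subspace built in Fig.~\ref{fig:1dim_sbusector} — will be used to make \emph{each} of those subsectors exponentially large. This is precisely the picture sketched in Fig.~\ref{fig:subsector_puz}(d). The glue is the domain wall of staggered dimers (DWSD) of Fig.~\ref{fig:dwsd}: by the DWSD blockade proposition, a DWSD frame around a region completely decouples its dynamics from everything outside, so the two ingredients can be combined without interference.

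Concretely, for $m=0$ I would start from the columnar state of Fig.~\ref{fig:staggered_ground_state_extensive}(a) together with its decomposition into ``bricks'' of three plaquettes bounded by the orange loops. Choose a sparse sublattice $S$ of bricks such that, for each brick in $S$, the caged active region built around it as in Fig.~\ref{fig:1dim_sbusector}(a) — a single flippable plaquette pair enclosed in a DWSD cage, occupying that brick and an $O(1)$-neighbourhood of bricks — are pairwise disjoint; let $S^c$ denote the bricks disjoint from all of these cages. Both $|S|$ and $|S^c|$ are $\Theta(L^2)$. Each brick of $S$ now carries a local active region of Krylov dimension exactly $2$ that, by the blockade proposition, no term of $H$ straddles. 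Each brick of $S^c$ is left untouched, so it still carries the binary loop-update freedom of Fig.~\ref{fig:staggered_ground_state_extensive} along its orange loop, and — exactly as in the proof of Proposition~\ref{prop:gs_vt_infty} — such a loop update creates no flippable plaquette pair anywhere, so the configuration stays inactive and the loop updates on distinct bricks of $S^c$ are mutually independent.

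The counting then proceeds as follows. Fix any assignment of loop-update states to the bricks of $S^c$; this gives $2^{|S^c|}=2^{\Theta(L^2)}$ global dimer configurations. Two configurations differing in the state of some brick of $S^c$ lie in \emph{different} Krylov subspaces: $H$ acts trivially across every active cage and on every (inactive, blockaded) brick of $S^c$, so the only available moves are the independent flips of the $|S|$ active bricks, none of which can alter the state of a brick of $S^c$. Hence we obtain at least $2^{\Theta(L^2)}=\gamma^{L^2}$ distinct subsectors with $\gamma>1$. Moreover, within any one of these subsectors the $|S|=\Theta(L^2)$ caged active bricks flip independently (again by the blockade proposition, since their supports are pairwise separated by DWSD), so its dimension is at least $2^{|S|}=2^{\Theta(L^2)}=\alpha^{L^2}$ with $\alpha>1$. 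For general $m$ one repeats the argument verbatim, substituting the $m\ge 1$ analogue of the caged active region (Fig.~\ref{fig:1dim_sbusector}(b) for $m=1$, and its straightforward higher-$m$ generalization) for the $m=0$ brick, and the corresponding maximally-staggered state for the columnar background.

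The main obstacle — and the only place genuine work is needed — is verifying the two decoupling claims \emph{robustly} and \emph{uniformly in $m$}: (i) that every caged active region is genuinely blockaded from all other active regions and from the loopable background, so that the global Krylov subspace factorizes over the active regions; and (ii) that a loop update on any brick of $S^c$, in the presence of arbitrarily many other such updates and of the neighbouring DWSD cages, never produces a flippable plaquette pair. For $m\ge 1$ the terms of Eq.~\eqref{eq:hQDM_gen_def_short_range} have larger — but still $O(1)$ — support, so the cages and the inter-brick spacing must be enlarged to a correspondingly larger constant; with that choice, (i) follows from the DWSD blockade proposition and (ii) is a finite, local check of exactly the type carried out in the proof of Proposition~\ref{prop:gs_vt_infty}. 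Since both $|S|$ and $|S^c|$ remain $\Theta(L^2)$ after this constant rescaling, the stated bounds $\gamma^{L^2}$ and $\alpha^{L^2}$ are unaffected.
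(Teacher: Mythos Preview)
Your proposal is correct and reaches the same conclusion, but the construction differs from the paper's. The paper tiles the lattice uniformly with $\ell\times\ell$ blocks (Fig.~\ref{fig:subsector_puzzle}): each block is an active region (flippable plaquettes on sublattice A, B, C, or D) framed by a DWSD collar, and the choice of block type at each tile position provides the $\gamma^{L^2}$ labels while the internal dynamics of each block provides the $\alpha^{L^2}$ dimension. The boundary checks are then that adjacent blocks and shared corners create no new flippable pairs. Your route instead keeps the columnar background of Proposition~\ref{prop:gs_vt_infty} and sprinkles in a $\Theta(L^2)$ array of the size-$2$ caged active regions from Fig.~\ref{fig:1dim_sbusector}, using the residual loop-update freedom on the untouched bricks to label the subsectors. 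The advantage of your approach is economy: it directly recycles the two earlier propositions and needs only the minimal two-state active cage, so the local checks you flag are essentially those already done. The paper's block-tiling is more uniform and matches Fig.~\ref{fig:subsector_puz}(d) literally, and it makes manifest that the active regions can themselves be macroscopic (dimension $f(\ell)$ rather than $2$), but this extra generality is not needed for the stated proposition. One small point: the DWSD blockade proposition as stated assumes each side contains flippable plaquettes in a \emph{single} sublattice, so if you allow your sparse cages to sit on different sublattices you should either restrict them all to the same sublattice or note that the inactive background adjacent to every cage boundary makes the hypothesis hold locally; either fix is immediate and does not affect your counting.
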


\begin{proof}
We explicitly construct $O(\gamma^{L^2})$ number of Hilbert space subsectors of size $O(\alpha^{L^2})$ for  hQDM Eq.~\eqref{eq:hQDM_gen_def_short_range} with general $m$.
We construct configurations using blocks of dimer configurations of size $\ell$-by-$\ell$.
To be concrete, we take $\ell = 12$  illustrated in Fig.~\ref{fig:subsector_puzzle}, 
but $\ell$ can be arbitrary as long as they are sufficiently large, e.g. in the main text we used $\ell=8$ in Fig.~\ref{fig:subsector_puz}.
Blocks A, B, C, and D are chosen to have bulks which contain flippable plaquette pairs (that preserve the $m$-th multipole moments) in sublattices A, B, C, and D respectively. 
Block F is an example of frozen block of size 12-by-12.
These blocks have boundaries of DWSD (see Appendix \ref{app:dwsd}), such that they 
are frozen under the actions of pairwise plaquettes flips along the boundary, or between the boundary and the bulk.
Next we piece these blocks together. 
We  claim that no flippable plaquette pairs are created between two parallel edges of blocks. For example, there should not be flippable plaquette pairs between blocks A and C in the top left of Fig.~\ref{fig:subsector_puzzle}b. 
This is true, since two aligned blocks will only create dimer pairs along the horizontal or vertical direction only, but not dimer pairs in both the horizontal and vertical directions.

Furthermore, we show that no flippable plaquette pairs are created at the corner in between four blocks that are pieced together. For example, there should not be  flippable plaquette pairs at the corner between between blocks B, D, D, A in the bottom left of Fig.~\ref{fig:subsector_puzzle}b. 
This is true since there exists no dimers across the thick black lines in Fig.~\ref{fig:subsector_puzzle}b by construction, and hence plaqutte dimerpairs of opposite orientations cannot be found adjacent to each other. 
Alternatively, one can enumerate all possible combinations of edges and corners between blocks, and show that no flippable plaquette pairs are created.

This construction allows us to have $O(\delta^{L^2/\ell^2})$ number of Hilbert space subsectors, since each choice of combination of blocks, e.g. the one given in Fig.~\ref{fig:subsector_puzzle}b, is disconnected to another distinct combination of blocks.
Furthermore, each Hilbert space subsector size is of order exponential in the system size of the system, since each block (suppose we exclude block F) can be constructed to have a lower bound dimension of $f(\ell)$ for some function $f$. The size of the Hilbert space subsector is larger than $O\left(f(\ell)^{ L^2/\ell^2}\right)$. 
To connect with the proposition above, we have $\gamma=  \delta^{1/\ell^2}$ and $\alpha \geq f(\ell)^{ 1/\ell^2}$.

\end{proof}

\begin{figure*}[t!]
\centering
\includegraphics[width=0.9 \textwidth  ]{./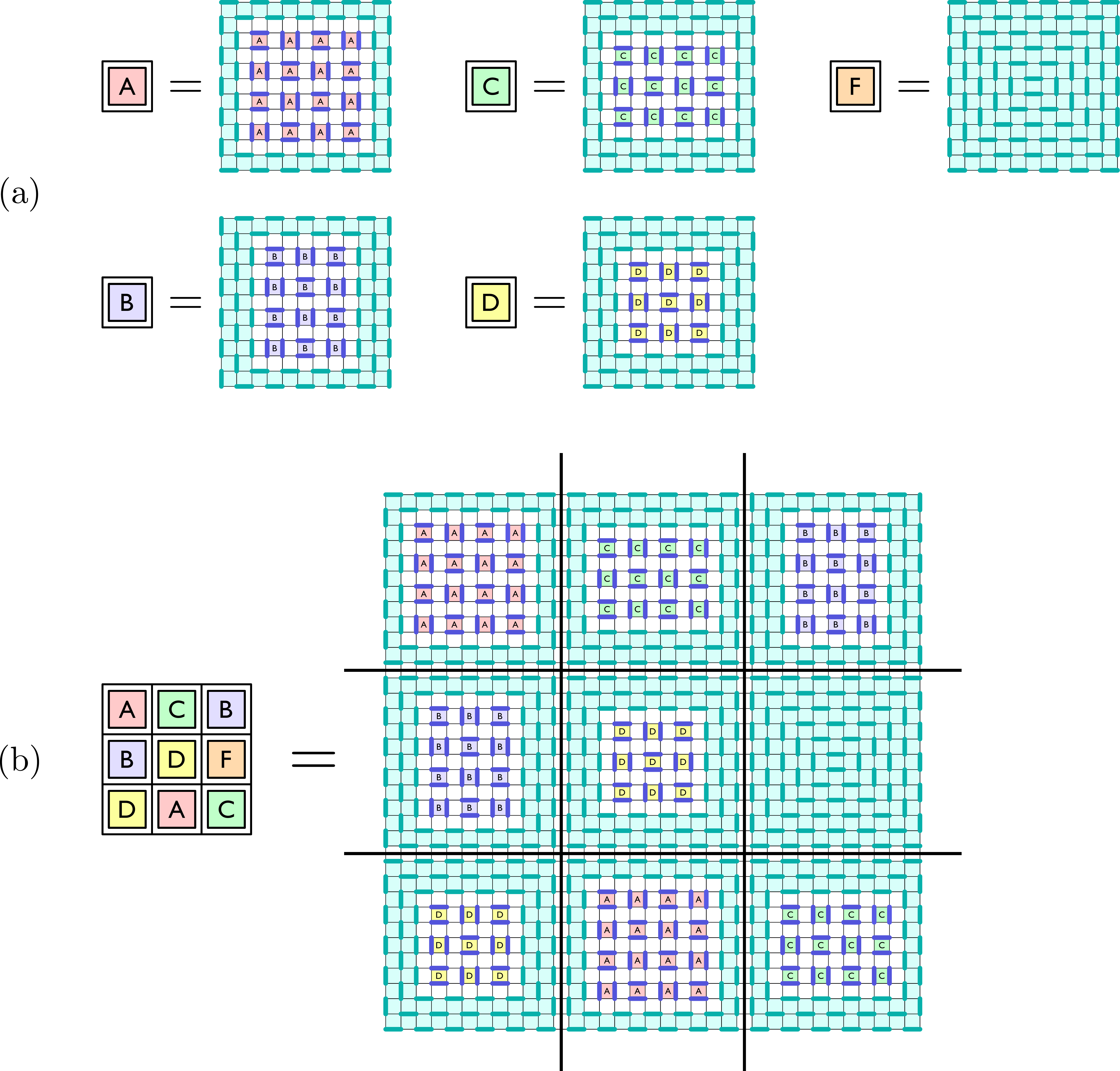}
    \caption{
   (a) Example 12-by-12 blocks with active regions of flippable dimers in sublattices A, B, C, and D, surrounded by inactive domain wall of staggered dimers (DWSD). 
   Block F denotes a block that is completely frozen or inactive. 
   The size of these blocks can be modified.
   (b) Exponential number of subsectors constructed by piecing together blocks given in (a).  
    } \label{fig:subsector_puzzle}
\end{figure*}

%%%%%%%%%%%%%%%%%%%%%%%
%%%%%%%%%%%%%%%%%%%%%%%
%%%%%%%%%%%%%%%%%%%%%%%
%%%%%%%%%%%%%%%%%%%%%%%
%%%%%%%%%%%%%%%%%%%%%%%
%%%%%%%%%%%%%%%%%%%%%%%
%%%%%%%%%%%%%%%%%%%%%%%
%%%%%%%%%%%%%%%%%%%%%%%
%%%%%%%%%%%%%%%%%%%%%%%
%%%%%%%%%%%%%%%%%%%%%%%
%%%%%%%%%%%%%%%%%%%%%%%
%%%%%%%%%%%%%%%%%%%%%%%
%%%%%%%%%%%%%%%%%%%%%%%
%%%%%%%%%%%%%%%%%%%%%%%
%%%%%%%%%%%%%%%%%%%%%%%
%%%%%%%%%%%%%%%%%%%%%%%
\section{Mapping between hQDM and spin models}
\begin{proposition} \label{prop:mapping}
Consider the short range and $m$-th-multipole-conserving hQDM on the square lattice defined in Eq.~\eqref{eq:hQDM_gen_def_short_range}.
Consider the Krylov subspace of state $\ket{\psi}_X$ where all plaquettes in sublattice $X \in \{\text{A, B, C, D} \}$ (see Fig.~\ref{fig:phasediag} bottom left) are occupied with parallel dimer pairs. 
The Hamiltonian restricted to subsector $X$ can be mapped to a constrained spin model using the mapping $g$ 
\be \label{eq:mapping}
\begin{aligned}
\; 
    \left|
\hspace{-0.05cm}
\vcenter{\hbox{  \includegraphics[scale = 0.1]{2020_05_Quantum_Dimer_Color_plaquette_2}
}}
\hspace{-0.05cm}
\right\rangle_\vx
\quad &
\xrightarrow{\; \; g \; \; } 
\quad 
    \left|
\downarrow
\right\rangle_{\vxt}
\\
\;
    \left|
\hspace{-0.05cm}
\vcenter{\hbox{  \includegraphics[scale = 0.1]{2020_05_Quantum_Dimer_Color_plaquette_1}
}}
\hspace{-0.05cm}
\right\rangle_\vx
\quad &
\xrightarrow{\; \; g \; \; } 
\quad 
    \left|
\uparrow
\right\rangle_{\vxt}
\end{aligned}
\;\; .
\ee
where the coordinate $\{\vxt \}$ labelling the plaquettes in  sublattice $X$ given by
\be \label{eq:coord}
\tilde{x} = \left( 
\frac{x_1 + \eta_1}{2},
\frac{x_2 + \eta_2}{2}
\right) \;,
\ee
with 
$(\eta_1, \eta_2)= (1,1), (0,1),(1,0),(0,0)$ for case A, B, C, and D respectively.
Hamiltonian now acts on the Krylov subspace of state $\ket{\psi}_X$ as 
\be
H^{(m)}_{\mathrm{hQDM}} 
\big|_{
\mk
\left(
%\hhQDM,\,  
{\ket{\psi}_{X}}
\right)
, \{\vx\} 
}
= H^{(m)}_{\mathrm{spin}}  \big|_{
\mk
\left(
%\hxxz, \, 
g\left( \ket{\psi}_X \right)
\right)
, \{\vxt\} 
}
\ee
where $H^{(m)}_{\mathrm{spin}}$ is given by \eqref{eq:hQDM_gen_def} using the mapping in  \eqref{eq:mapping} and \eqref{eq:coord}.

\end{proposition}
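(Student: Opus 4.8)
The plan is to establish the mapping $g$ as an explicit bijection between the relevant Krylov subspace basis and a subset of spin configurations, and then to verify that $H^{(m)}_{\mathrm{hQDM}}$ acts on the former exactly as $H^{(m)}_{\mathrm{spin}}$ acts on the latter term by term. The key structural input, already noted in the body of the excerpt, is that within $\mk(\ket{\psi}_X)$ every flippable plaquette pair remains confined to the single sublattice $X$ under the action of $H$; I would prove this invariance first (or cite it from the discussion following Eq.~\eqref{eq:xxz_mapping}), since it guarantees that the spin degrees of freedom $\ket{\uparrow}_{\vxt},\ket{\downarrow}_{\vxt}$ introduced on the rescaled lattice $\{\vxt\}$ faithfully capture all the dynamical content: all plaquettes \emph{not} in sublattice $X$ stay inert (part of the DWSD-like frozen background), so the only variable data is the $\uparrow/\downarrow$ orientation of the parallel dimer pair on each $\vx\in X$.

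\textbf{Step 1: well-definedness of $g$ and the coordinate change.} I would check that Eq.~\eqref{eq:coord} is the correct rescaling: the flippable plaquettes of sublattice $X$ form a square sublattice with spacing $2$, and the shifts $(\eta_1,\eta_2)$ select which of the four cosets; then $\vxt$ runs over an $L\times L$ lattice when $\vx$ runs over the $2L\times 2L$ original plaquettes. One must also confirm that $g$ sends the constraint structure correctly: the image is not all of spin-$\frac12$ Hilbert space but the Krylov subspace $\mk(g(\ket{\psi}_X))$, so $g$ restricted to $\mk(\ket{\psi}_X)$ is a bijection onto $\mk(g(\ket{\psi}_X))$ — this follows once the term-by-term matching in Step 2 is in place, because equal Hamiltonians generate equal Krylov subspaces from corresponding seeds.

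\textbf{Step 2: term-by-term identification.} Using the height-flip operator presentation Eq.~\eqref{eq:hQDM_gen_def_short_range}, each resonant term $\hr^{(m)}(\vx;\vvl_m(i)) + \br^{(m)}(\vx;\vvl_m(i))$ is a product of single-plaquette flip operators $\hr_\vy,\br_\vy$ at plaquettes $\vy$ spaced by the $\vvl_m(i)$ pattern — and by construction, when the seed is $\ket{\psi}_X$, all these $\vy$ lie in $X$. Under $g$, $\hr_\vy \mapsto S^{+}$ or $S^{-}$ (depending on the $\mA/\mB$ bipartite label, matching the $\vv\bvv^\dagger$ vs $\bvv\vv^\dagger$ distinction in Eqs.~\eqref{eq:heigh_flip1}--\eqref{eq:heigh_flip2}), and the product structure $\hb^{(m)} = \hb^{(m-1)}\,\bb^{(m-1)}$ becomes a product of raising/lowering operators on the rescaled neighbors $\vxt,\vxt'$; likewise $\hd_\vy,\bd_\vy \mapsto$ diagonal projectors onto $\ket{\uparrow}$ or $\ket{\downarrow}$, which reassemble into the $\tfrac{V}{2}(1 - \sz_{\vxt}\sz_{\vxt'})$ Ising term. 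For $m=0$ this reproduces exactly Eq.~\eqref{eq:xxz_model}; for general $m$ it produces the multipole-conserving spin Hamiltonian $H^{(m)}_{\mathrm{spin}}$ with interaction pattern inherited from $\vvl_m(i)$. I would present the $m=0$ case in full and remark that the inductive structure of Eq.~\eqref{eq:b_op} makes the general-$m$ case a routine induction.

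\textbf{Main obstacle.} The crux is \emph{not} the algebra of operator identification, which is essentially bookkeeping; it is establishing rigorously that the flippable-plaquette-confinement property holds, i.e.\ that starting from $\ket{\psi}_X$ no sequence of Hamiltonian moves ever creates a flippable plaquette pair outside sublattice $X$. This requires an argument of the same flavor as the DWSD blockade proof in Appendix~\ref{app:dwsd} and the $V/t\to-\infty$ ground-state proof: one shows that any parallel dimer pair produced on a non-$X$ plaquette is necessarily isolated (its potential partner plaquettes are forced by the surrounding dimer pattern into orientations that cannot pair with it), so no new resonance channel opens. Handling all edge/corner adjacencies of the active region, and confirming this is stable under the longer-range $m\ge 1$ move patterns, is where the real care is needed; everything downstream then follows mechanically.
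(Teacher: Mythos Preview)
Your proposal is correct and follows essentially the same approach as the paper. The paper's own proof is far more terse: it simply invokes the earlier result from Sec.~\ref{app:foursubsector} (the $V/t\to -\infty$ ground-state analysis around Fig.~\ref{fig:grid_ground_state_proof}) to assert that only terms supported on sublattice $X$ act non-trivially in this Krylov subspace, and treats the operator identification and coordinate rescaling as immediate; you have correctly identified that confinement lemma as the crux and spelled out the bookkeeping in Steps~1--2 that the paper leaves implicit.
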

\vspace{0.2cm}

\begin{proof}
This is a direct consequence of the fact given in Sec. \ref{app:foursubsector} that in the said Hilbert space subsector (in which all plaquettes in sublattice $X \in \{\text{A, B, C, D} \}$ are occupied with parallel dimer pairs), only terms supported in the sublattice $X$ act non-trivially (see Fig.~\ref{fig:grid_ground_state_proof}).   
\end{proof}

As an example, for $m=0$, $H^{(m=0)}_{\mathrm{hQDM}}$ in Eq. \eqref{eq:hQDM_def} can be mapped to
\be
\left.
H^{(m=0)}_{\mathrm{hQDM}} \right|_{\mathrm{X}, \{\vx \}}
=
\left.
H_{\mathrm{spin}}^{(m=0)}
\right|_{\{\vxt \} } 
= 
H_{\mathrm{XXZ}}
\Big|_{ \{ \vxt \} } 
\ee

\be
H_\mathrm{spin}^{(0)} = H_{\mathrm{XXZ}} = -t \sum_{\left\langle \vxt, \vxt' \right\rangle} \left( 
\splus_{\vxt} \sminus_{\vxt'}  + \sminus_{\vxt} \splus_{\vxt'}  
\right) 
+
\frac{V}{2} \sum_{\left\langle \vxt, \vxt' \right\rangle} \left(1 -  \sz_{\vxt} \sz_{\vxt'}  \right) 
\ee
where the sum is over all nearest neighboring sites on a square lattice.

%%%%%%%%%%%%%%%%%%%%%
%%%%%%%%%%%%%%%%%%%%%%%%%%%%%%%%%%%%%%%%%%%%%%%%%%%%%%%%%%%%%%%%%%%%%%%%%%%%%%%%%%%%
%%%%%%%%%%%%%%%%%%%%%
%%%%%%%%%%%%%%%%%%%%%

The sum sums over all possible vertically- or horizontally-aligned next-to-nearest-neighbour plaquette pairs.

%%%%%%%%%%%%%%%%%%%%%%%
%%%%%%%%%%%%%%%%%%%%%%%
%%%%%%%%%%%%%%%%%%%%%%%%%%%%%%%%%%%%%%%%%%%%%%
%%%%%%%%%%%%%%%%%%%%%%%%%%%%%%%%%%%%%%%%%%%%%%
%%%%%%%%%%%%%%%%%%%%%%%
%%%%%%%%%%%%%%%%%%%%%%%
%%%%%%%%%%%%%%%%%%%%%%%
%%%%%%%%%%%%%%%%%%%%%%%
%%%%%%%%%%%%%%%%%%%%%%%
%%%%%%%%%%%%%%%%%%%%%%%
\section{Dimer pair correlation functions and structure factors}
We define the pair dimer operator on a plaquette at position $\xvec$ as 
\begin{equation}
\begin{aligned}
    \dop_{\hordimers,\xvec}&=
      \left|
\hspace{-0.05cm}
\vcenter{\hbox{  \includegraphics[scale = 0.1]{2020_05_Quantum_Dimer_Color_plaquette_1}
}}
\hspace{-0.05cm}
\right\rangle_\xvec
\left\langle
\hspace{-0.05cm}
\vcenter{\hbox{  \includegraphics[scale = 0.1]{2020_05_Quantum_Dimer_Color_plaquette_1}
}}
\hspace{-0.05cm}
\right|_\xvec
    \\
    \dop_{\verdimers,\vx}&=
      \left|
\hspace{-0.05cm}
\vcenter{\hbox{  \includegraphics[scale = 0.1]{2020_05_Quantum_Dimer_Color_plaquette_2}
}}
\hspace{-0.05cm}
\right\rangle_\xvec
\left\langle
\hspace{-0.05cm}
\vcenter{\hbox{  \includegraphics[scale = 0.1]{2020_05_Quantum_Dimer_Color_plaquette_2}
}}
\hspace{-0.05cm}
\right|_\xvec \;. 
\end{aligned}
\end{equation}
and the pair correlation function $C_{ij,\xvec-\xvec'}$ between $\dop_{i,\xvec}$ and $\dop_{j,\xvec'}$ as:
\begin{equation}\label{eq:cor_func}
    C_{ij,\xvec-\xvec'}=\frac{\langle \dop_{i,\xvec}\dop_{j,\xvec'}\rangle-\langle \dop_{i,\xvec}\rangle\langle \dop_{j,\xvec'}\rangle}{
    \langle \dop_{i,\xvec}\dop_{j,\xvec}\rangle-\langle \dop_{i,\xvec}\rangle \langle\dop_{j,\xvec}\rangle}
\end{equation}
where $i,j=\hordimers,\verdimers$ and $\xvec-\xvec'$ is the position difference. 
\begin{figure}[htp]
\centering
\includegraphics[width=0.85 \textwidth]{./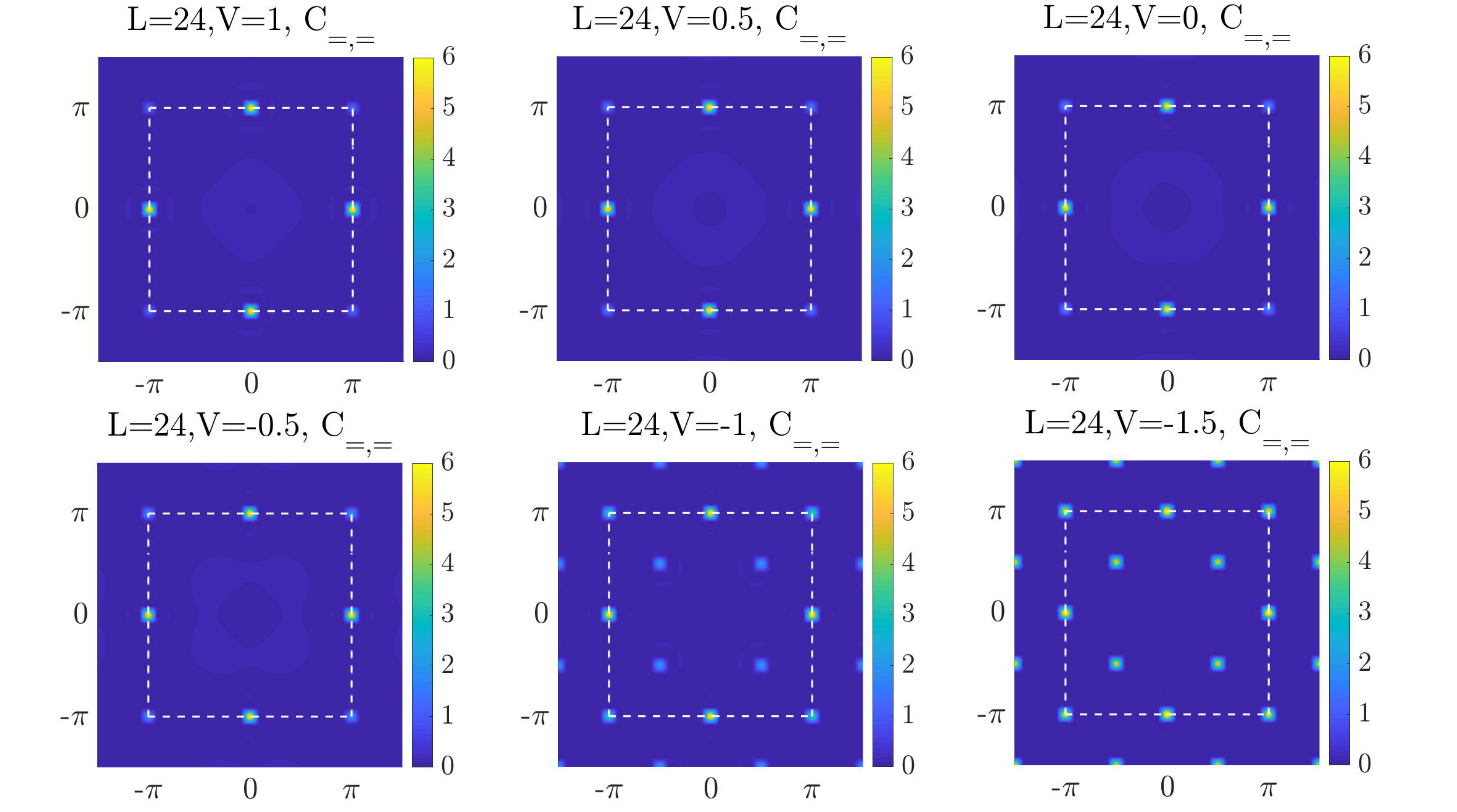}
    \caption{The structure factor of dimer pair correlation function $C_{ij,\xvec-\xvec'}$ for $V/t $ from $1$ to $-1.5$ in steps of $0.5$.
    The linear lattice size is $L=24$, and the temperature $T=1/\beta=1/24$.
    }
    \label{SK}
\end{figure}
\begin{figure}[htp]
\centering
\includegraphics[width=0.85 \textwidth  ]{./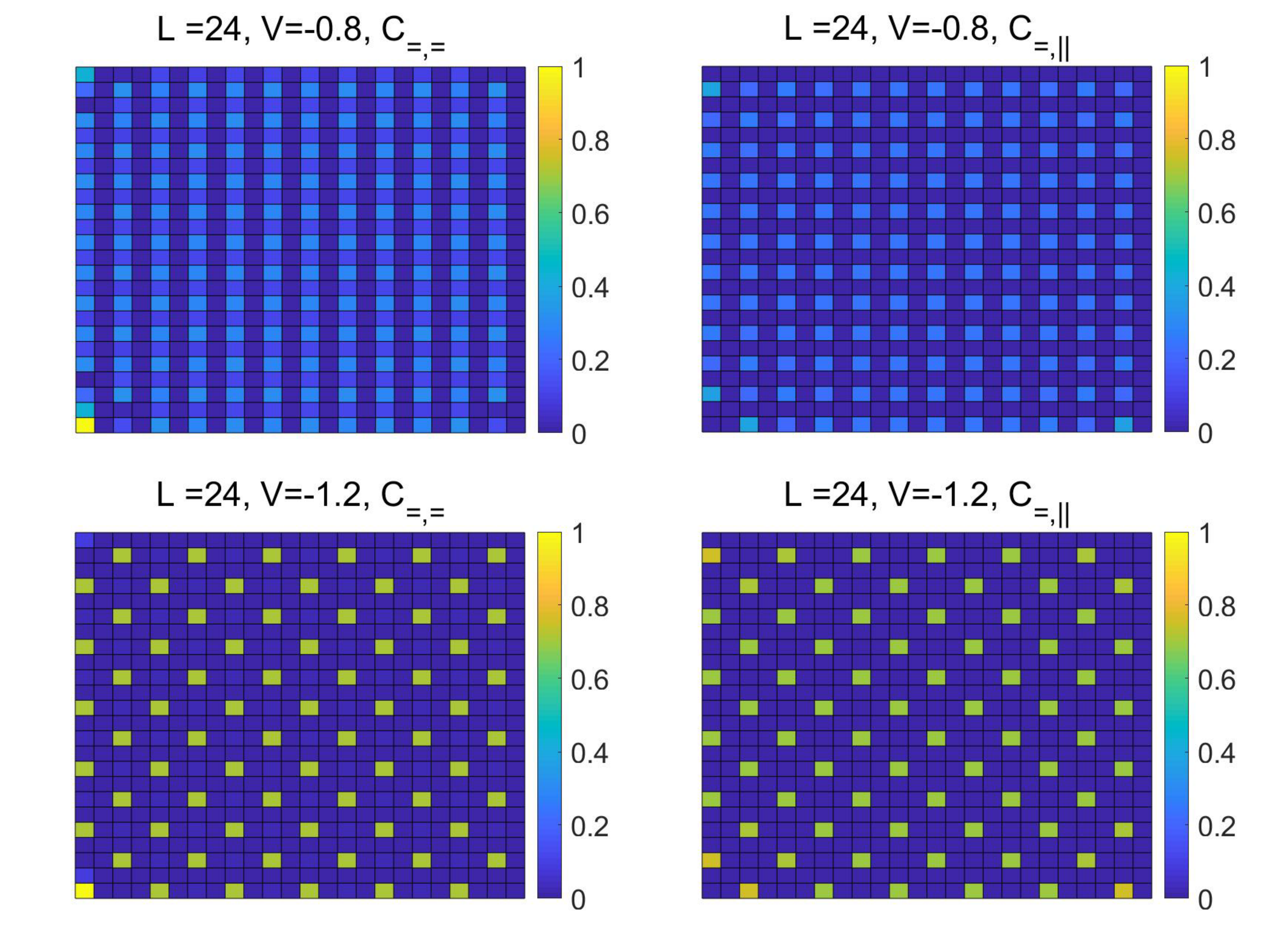}
    \caption{The correlation functions $C_{ij,\vx-\vx'}$ in the real space with $V/t = -0.8$ (upper panels) and $V=-1.2$ (lower panels) and $\vx'$ fixed at the bottom left plaquette. 
    The linear system size is $L=24$, and the temperature $T=1/\beta=1/24$. 
    The color coding denotes the probably of finding a dimer pair at a given plaquette, with yellow means high probability and blue means low probability, the other colors are in between. 
       % \red{[AC: The correlation function should be with plaquettes with vertically parallel dimer pairs? See the green plaquette in the bottom left.]}
    }
    \label{SR}
\end{figure}

In the momentum space, and in the VBS phase, the structure factor of $C_{\hordimers,\hordimers}$ should have sharp peaks at $\mathrm{X}$ points, that is, $(\pm \pi/2, \pm\pi/2)$ points as shown in Fig.3 (d) of main text. %
In a plaquette phase, the $\mathrm{X}$ peaks of $C_{\hordimers,\hordimers}$ disappear but the peaks of $\mathrm{M}$, i.e., $ (\pm\pi, 0)$ and $(0, \pm\pi)$ points.
As shown in Fig.~\ref{SK}, the numerical results are consistent with the two phases. 

In the real space, we can also read the above structures of correlation functions in Fig.\ref{SR}. 
In the VBS phase (for example at $V/t=-1.2$), the correlation $C_{\hordimers,\hordimers}$ shows that the horizontal dimer pairs favor translation periodicity of four.
Similarly, the $C_{\hordimers,\verdimers}$ shows the vertical dimer pairs also favor the same translation period, but vertical dimers prefer to stay two plaquettes away from the horizontal ones, such that a flippable plaquette pairs are formed. 
In the plaquette phase (at $V=0$), indeed we see the translation period of both horizontal and vertical correlations becomes two. These features are consistent with the corresponding VBS and plaquette phases.
\begin{figure}[htp]
\centering
\includegraphics[width=0.5 \textwidth  ]{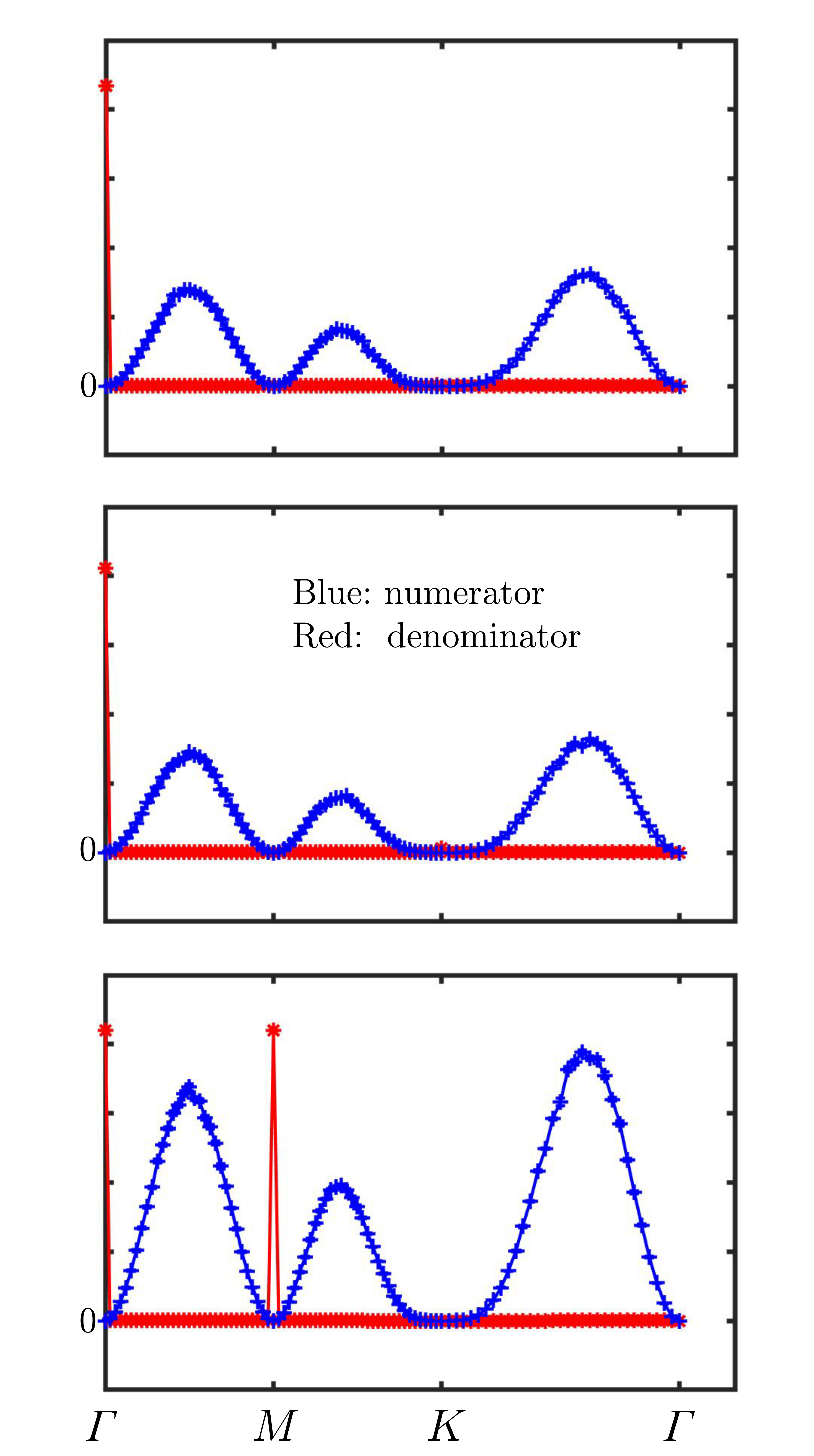}
    \caption{Three examples for the numerator and denominator parts of Eq.~\eqref{SMAEQ} at RK point. The denominator has some singular points but stand on the zero point of numerator. This ensures the effectiveness of SMA.
    }
    \label{ff}
\end{figure}

\section{Single Mode Approximation}
In this paper, we use single mode approximation~\cite{moessner2003three, L_uchli_2008} (SMA) upon the QMC data to demonstrate the large dynamical exponent $z$ of the gapless modes.
We define a dimer number operator $n_{\alpha}( \rvec )$ which has eigenvalues 0 or 1 upon acting on a state with or without a dimer along positive $\alpha$-direction at site $\rvec$, where $\alpha = x, y$.
The density operator in momentum space is given by
$\dop_\alpha( \Qvec)=\frac{1}{\sqrt{N}}
\sum_{\rvec}
\exp(-i \Qvec \cdot \rvec )
\,
n_\alpha (\rvec)$.
This allows us to define the state $|\Qvec \rangle= \dop_\alpha( \Qvec)\ket{\GS }$, where $\ket{\GS}$ is the ground state from QMC simulation.
We obtain a variational energy of
\be
\omega(\Qvec) \leq \omega_{\SMA}(\Qvec)
=
\frac{\langle \GS| [\dop_\alpha(-\Qvec),[H,\dop_\alpha(\Qvec)]]| \GS \rangle
}{
\langle \GS| \dop_\alpha(-\Qvec) \dop_\alpha(\Qvec)
|\GS \rangle} \;,
\label{SMAEQ}
\ee
which allows us to obtain an upper bound of $\omega(\Qvec) \leq \omega_{\SMA}(\Qvec)$ and hence information about the low-lying spectrum by studying the correlation of ground state.

SMA works well near gaplessness to reveal the dispersion of soft excitations. For example, while the density  $\dop_x(\Qvec)$ is a conserved quantity, i.e., $[H,\dop_x(\Qvec)]=0$, the energy $\omega(\Qvec)=\omega_{\SMA}(\Qvec)=0$ demonstrates the existence of gapless modes at $\Qvec$. As shown in Fig.\ref{ff}, The denominator has some singular points but stand on the zero point of numerator. This ensures the effectiveness of SMA.
The behaviours of $\omega_{\SMA}$ near $\Qvec$ is used to determine the bound of the soft mode dispersion in QDM~\cite{moessner2008review, L_uchli_2008}.

For hQDM, it is straightforward to show that there are at least four gapless modes at $\Qvec = (0,0), (\pi,0), (0,\pi)$ and $(\pi,\pi)$. Since the physics at $\Qvec = (\pi,0), (0,\pi)$ are related by symmetries, we compute in the main text the dispersion around the four momenta except $\Qvec = (0,\pi)$.

\end{document}